\documentclass[sigconf]{aamas}

\usepackage{balance}
\usepackage{graphicx}
\usepackage{amsmath}
\usepackage{algorithm}
\usepackage{algorithmic}
\usepackage{booktabs}
\usepackage{tikz}
\usepackage{pgfplots}
\pgfplotsset{compat=1.18}
\usetikzlibrary{arrows,positioning,shapes}
\usepackage{multirow}
\usepackage{xcolor}
\usepackage{hyperref}

\setcopyright{ifaamas}
\acmConference[AAMAS '26]{Proc.\@ of the 25th International Conference
on Autonomous Agents and Multiagent Systems (AAMAS 2026)}{May 25 -- 29, 2026}
{Paphos, Cyprus}{C.~Amato, L.~Dennis, V.~Mascardi, J.~Thangarajah (eds.)}
\copyrightyear{2026}
\acmYear{2026}
\acmDOI{}
\acmPrice{}
\acmISBN{}

\acmSubmissionID{<<submission id>>}

\title[G²CP: Graph-Grounded Communication Protocol]{G²CP: A Graph-Grounded Communication Protocol for Verifiable and Efficient Multi-Agent Reasoning}

\author{Karim Ben Khaled}
\affiliation{
  \institution{University of Lorraine, LORIA}
  \city{Nancy}
  \country{France}}
\email{karim.ben-khaled@univ-lorraine.fr}

\author{Davy Monticolo}
\affiliation{
  \institution{University of Lorraine, LORIA}
  \city{Nancy}
  \country{France}}
\email{davy.monticolo@univ-lorraine.fr}

\begin{document}

\pagestyle{fancy}
\fancyhead{}


\begin{abstract}
Multi-agent systems powered by Large Language Models face a critical challenge: agents communicate through natural language, leading to semantic drift, hallucination propagation, and inefficient token consumption. We propose G²CP (Graph-Grounded Communication Protocol), a structured agent communication language where messages are graph operations rather than free text. Agents exchange explicit traversal commands, subgraph fragments, and update operations over a shared knowledge graph, enabling verifiable reasoning traces and eliminating ambiguity. We validate G²CP within an industrial knowledge management system where specialized agents (Diagnostic, Procedural, Synthesis, and Ingestion) coordinate to answer complex queries. Experimental results on 500 industrial scenarios and 21 real-world maintenance cases show that G²CP reduces inter-agent communication tokens by 73\%, improves task completion accuracy by 34\% over free-text baselines, eliminates cascading hallucinations, and produces fully auditable reasoning chains. G²CP represents a fundamental shift from linguistic to structural communication in multi-agent systems, with implications for any domain requiring precise agent coordination. Code, data, and evaluation scripts are publicly available.\footnote{\url{https://github.com/karim0bkh/G2CP_AAMAS}}
\end{abstract}

\maketitle

\keywords{Multi-agent systems, Agent communication, Knowledge graphs, Large language models, Industrial AI}


\section{Introduction}

LLM-based multi-agent systems have adopted natural language as their default communication medium~\cite{wang2024survey}. Frameworks such as AutoGen~\cite{wu2023autogen}, LangGraph~\cite{langgraph2024}, and CrewAI~\cite{crewai2024} enable developers to decompose complex tasks across specialized agents that collaborate through dialogue. While this linguistic flexibility lowers the barrier to multi-agent design, it introduces critical vulnerabilities: \textbf{semantic drift}, where meaning degrades across successive agent reinterpretations; \textbf{hallucination propagation}, where fabricated claims in one agent's output corrupt downstream reasoning; and \textbf{computational waste}, where verbose natural language exchanges consume tokens far in excess of their information content.

Consider a diagnostic scenario in industrial maintenance: Agent~A identifies a ``bearing failure'' and asks Agent~B to ``find the repair procedure.'' Agent~B interprets ``bearing failure'' as a generic bearing issue, retrieves a procedure for a different bearing type, and Agent~C estimates costs based on the wrong parts list. This cascading error arises because natural language lacks the precision needed for reliable inter-agent communication. The root causes are:

\begin{itemize}
\item \textbf{Referential ambiguity}: ``The main pump'' could refer to multiple entities.
\item \textbf{Temporal ambiguity}: ``Recent failures'' has no precise temporal scope.
\item \textbf{Relational ambiguity}: ``Connected to X'' does not specify relationship type.
\item \textbf{Contextual drift}: Each agent re-embeds and reinterprets previous messages.
\end{itemize}

We propose a radical alternative: \textbf{agents should communicate through structured graph operations rather than natural language}. Our contribution, the \textbf{Graph-Grounded Communication Protocol (G²CP)}, replaces textual messages with explicit commands over a shared knowledge graph. Instead of Agent~A saying ``find repair procedures for bearing B-4521,'' it sends:

\texttt{TRAVERSE FROM \{Part:B-4521\} VIA \{addressed\_by\} FOR 1 RETURN SUBGRAPH}

This message is unambiguous, executable, and verifiable. Agent~B performs the exact graph operation, returns a subgraph fragment, and the entire reasoning chain becomes auditable.

\paragraph{Intuition.}
The core idea is simple: if two agents share a knowledge graph, they can point to \emph{exactly} which nodes and edges they mean instead of describing them in words. A G²CP message is like handing a colleague a database query rather than an email---there is no room for misinterpretation. The protocol wraps these queries in classical performatives (REQUEST, INFORM, etc.) so that agents retain the social coordination mechanisms pioneered by FIPA-ACL~\cite{fipa2002acl}, but ground every content expression in the graph rather than in predicate logic or free text.

\subsection{Contributions}

This paper makes four primary contributions:

\begin{enumerate}
\item \textbf{The G²CP Protocol}: A formal agent communication language grounded in graph operations, including syntax, semantics, commitment semantics, and an explicit comparison with FIPA-ACL (Section~\ref{sec:protocol}).

\item \textbf{Multi-Agent Architecture}: A system design where specialized agents coordinate exclusively through G²CP, with LLM-based dynamic operation selection (Section~\ref{sec:architecture}).

\item \textbf{Experimental Validation}: Comprehensive evaluation on 500 synthetic and 21 real-world industrial scenarios, with full baseline implementation details and reproducibility artifacts (Section~\ref{sec:experiments}).

\item \textbf{Formal Analysis}: Proofs of protocol properties including determinism, auditability, and completeness guarantees, with a security and trust model (Section~\ref{sec:theory}).
\end{enumerate}

Our work bridges agent communication languages from classical AI~\cite{fipa2002acl,finin1994kqml,singh1998semantics,fornara2004communicative} with modern graph-augmented LLM systems, creating a foundation for reliable multi-agent reasoning.


\section{Related Work}
\label{sec:related}

\subsection{Agent Communication Languages}

Agent communication has a rich history predating LLMs. KQML~\cite{finin1994kqml} established knowledge-level performatives for inter-agent messaging. FIPA-ACL~\cite{fipa2002acl} refined these into a standard with formal semantics grounded in mental attitudes (beliefs, desires, intentions). Both assumed symbolic agents with shared ontologies and used content languages such as SL and KIF.

A significant line of work moved beyond mentalistic semantics toward \textbf{commitment-based approaches}. Singh~\cite{singh1998semantics} proposed social commitments as the basis for agent communication, arguing that meaning should be grounded in publicly observable social facts rather than private mental states. Yolum and Singh~\cite{yolum2002flexible} developed commitment protocols that enable flexible, exception-handling agent interactions. Fornara and Colombetti~\cite{fornara2004communicative} formalized communicative acts in terms of commitments and their lifecycles. Winikoff et al.~\cite{winikoff2005declarative} introduced declarative approaches to agent communication that further decoupled protocol specification from implementation.

Modern LLM-based agents typically abandon structured protocols in favor of natural language~\cite{wang2024survey}. While this enables flexibility, it sacrifices precision and verifiability. Our work resurrects structured communication but grounds it in graph operations rather than predicate logic, making it compatible with neural retrieval systems while preserving the social commitment guarantees of classical ACLs.

Table~\ref{tab:acl_comparison} provides a systematic comparison across content language, semantic grounding, neural compatibility, and verifiability.

\begin{table}[t]
\centering
\caption{Comparison of agent communication approaches}
\label{tab:acl_comparison}
\small
\begin{tabular}{@{}lcccc@{}}
\toprule
\textbf{Feature} & \textbf{KQML} & \textbf{FIPA} & \textbf{Commit.} & \textbf{G²CP} \\
\midrule
Content lang. & KIF & SL/KIF & Logic & GraphOp \\
Semantics & Mentalistic & BDI & Social & Graph \\
Neural compat. & \texttimes & \texttimes & \texttimes & \checkmark \\
Embeddings & \texttimes & \texttimes & \texttimes & \checkmark \\
Verifiable & Partial & Partial & \checkmark & \checkmark \\
Deterministic & \texttimes & \texttimes & Partial & \checkmark \\
Token-efficient & N/A & N/A & N/A & \checkmark \\
\bottomrule
\end{tabular}
\end{table}

\subsubsection{Semantic Heterogeneity: FIPA Ontologies vs.\ G²CP}
\label{sec:heterogeneity}

A core challenge for classical ACLs is \textbf{semantic heterogeneity}: when multiple agents maintain different ontologies, aligning symbolic definitions at runtime requires ontology matching~\cite{euzenat2007ontology}. FIPA addressed this through ontology services and content language negotiation, but practical deployments often struggled with interoperability.

G²CP sidesteps this problem by design. All agents operate on a \textbf{single shared graph instance} where entities are resolved once, during graph construction, through an entity resolution pipeline (string normalization, embedding-based fuzzy matching, domain-expert validation). Agent specializations represent different \emph{views} of the same graph---selecting different edge types for traversal---rather than different ontologies. This means inter-agent messages reference identical node identifiers, eliminating the need for runtime alignment.

\subsection{Multi-Agent LLM Systems}

Recent frameworks demonstrate impressive capabilities through natural language coordination. AutoGen~\cite{wu2023autogen} enables conversational workflows between specialized agents. LangGraph~\cite{langgraph2024} structures agent interaction as state machines with natural language transitions. CrewAI~\cite{crewai2024} assigns roles and goals to agents that collaborate through dialogue. MetaGPT~\cite{hong2024metagpt} uses structured outputs like documents and diagrams for coordination, and ChatDev~\cite{qian2024chatdev} simulates a software company with role-specialized agents.

However, these systems suffer from semantic drift. Liang et al.~\cite{liang2024debate} found that hallucination rates compound across agent chains, with errors in one agent's output corrupting downstream reasoning. Guo et al.~\cite{guo2024multiagent} showed that 43\% of multi-agent failures stem from miscommunication rather than individual agent errors.

G²CP addresses this by eliminating linguistic interpretation between agents entirely. Agents still use LLMs for understanding user queries and generating final responses, but inter-agent messages are deterministic graph operations.

\subsection{Knowledge Graphs and Agent Systems}

Knowledge graphs provide structured knowledge representation~\cite{hogan2021knowledge}. Recent work integrates graphs with LLMs for retrieval-augmented generation~\cite{edge2024graphrag,pan2024unifying,luo2024rog}. However, these systems treat the graph as a passive data structure rather than a communication medium.

Our work is conceptually closest to blackboard architectures~\cite{nii1986blackboard} where agents coordinate through a shared data structure, and to the Contract Net Protocol~\cite{smith1980contract} where structured task announcements replace free-form negotiation. G²CP formalizes this: the knowledge graph serves as both the information repository and the communication substrate. Agents post graph operations rather than symbolic assertions, enabling both coordination and knowledge retrieval through a unified mechanism.


\section{The G²CP Protocol}
\label{sec:protocol}

\subsection{Formal Definition}

Let $G = (V, E, \Lambda, \Psi)$ be a heterogeneous directed knowledge graph where $V = \{v_1, \ldots, v_n\}$ is the set of vertices (entities), $E \subseteq V \times V$ is the set of directed edges (relationships), $\Lambda = \{\lambda_1, \ldots, \lambda_k\}$ is the set of node types, and $\Psi = \{\psi_1, \ldots, \psi_m\}$ is the set of edge types.

Each node $v_i$ has type $\lambda(v_i) \in \Lambda$, a dense vector embedding $\mathbf{x}_i \in \mathbb{R}^d$ (computed by a sentence encoder over the node's textual description, enabling semantic similarity search), and attributes $A_i$ (a key-value dictionary storing structured properties such as \texttt{\{name: "Bearing B-4521", serial: "SN-9042", install\_date: "2021-03-15"\}}). Each edge $e_{ij} = (v_i, v_j)$ has type $\psi(e_{ij}) \in \Psi$, weight $w_{ij} \in [0,1]$, and timestamp $t_{ij}$.

A \textbf{G²CP message} is a tuple:
\begin{equation}
m = \langle \mathit{sender}, \mathit{receiver}, \mathit{perf}, \mathit{op}, \mathit{ctx} \rangle
\end{equation}

where $\mathit{sender}, \mathit{receiver} \in \mathcal{A}$ (agent identifiers), $\mathit{perf} \in \mathcal{P}$ (performative from Table~\ref{tab:performatives}), $\mathit{op}$ is a graph operation (Definition~\ref{def:graphop}), and $\mathit{ctx} = (\mathit{conv\_id}, G_{\mathit{focus}})$ where $G_{\mathit{focus}} \subseteq G$ is the current conversation subgraph.

\paragraph{Intuition.} A G²CP message is an instruction slip between two colleagues who share the same database. The performative says \emph{what kind of speech act} is being performed (asking, telling, proposing), and the operation says \emph{exactly which data} is involved. Because both agents see the same graph, there is zero room for misinterpretation of the content.

\begin{table}[t]
\centering
\caption{G²CP performatives and their semantics}
\label{tab:performatives}
\small
\begin{tabular}{@{}lp{7cm}@{}}
\toprule
\textbf{Performative} & \textbf{Semantic Commitment} \\
\midrule
REQUEST & Sender asks receiver to execute $op$ and return results \\
INFORM & Sender asserts that $op$ has been executed with specified results \\
QUERY & Sender asks whether $op$ would return non-empty results \\
PROPOSE & Sender suggests $op$ as a candidate operation for consideration \\
CONFIRM & Sender validates receiver's previous operation result \\
REJECT & Sender indicates receiver's operation violated constraints \\
UPDATE & Sender commands modification to $G$ via $op$ \\
\bottomrule
\end{tabular}
\end{table}

\begin{definition}[Graph Operation]
\label{def:graphop}
A graph operation $op$ has one of two forms:
\begin{enumerate}
\item \textbf{Traversal}: $op = \mathtt{TRAVERSE}(V_s, \Psi_f, h, \mathit{ret})$ where $V_s \subseteq V$ is the source node set, $\Psi_f \subseteq \Psi$ is the edge type filter, $h \in \mathbb{N} \cup \{\infty\}$ is the hop depth, and $\mathit{ret} \in \{\mathtt{SUBGRAPH}, \mathtt{PATHS}, \mathtt{LEAVES}\}$ specifies return type.

\item \textbf{Update}: $op = \mathtt{UPDATE}(\Delta G)$ where $\Delta G = (\Delta V^+, \Delta V^-, \Delta E^+, \Delta E^-)$ specifies nodes and edges to add/remove.
\end{enumerate}
\end{definition}

\subsection{Operational Semantics}

\paragraph{Intuition.} A traversal starts at a set of source nodes and ``walks'' outward along edges of specified types, collecting everything reachable within a given number of hops. The return format controls whether the agent gets back the full subgraph explored, just the end-points, or the specific paths taken.

Formally, the semantics of a traversal operation is:
\begin{equation}
\mathcal{T}: V_s \times \Psi_f \times \mathbb{N} \times \{\mathtt{SUBGRAPH}, \mathtt{PATHS}, \mathtt{LEAVES}\} \rightarrow 2^{V \times E}
\end{equation}

Computed recursively as:
\begin{equation}
\mathcal{T}(V_s, \Psi_f, h, \mathit{ret}) = \begin{cases}
\mathit{extract}(V_s, \emptyset) & \text{if } h = 0 \\
\mathcal{T}(V_s \cup N(V_s, \Psi_f), \Psi_f, h-1, \mathit{ret}) & \text{otherwise}
\end{cases}
\end{equation}

where $N(V_s, \Psi_f) = \{v_j \mid \exists v_i \in V_s, (v_i, v_j) \in E, \psi(v_i, v_j) \in \Psi_f\}$ computes the filtered neighborhood, and $\mathit{extract}$ formats the result based on $\mathit{ret}$.

An update operation transforms the graph:
\begin{equation}
G' = \mathtt{UPDATE}(G, \Delta G) = (V', E', \Lambda, \Psi)
\end{equation}
where $V' = (V \cup \Delta V^+) \setminus \Delta V^-$ and $E' = (E \cup \Delta E^+) \setminus \Delta E^-$.

\subsection{Protocol Properties}

We state core properties here; full proofs appear in Appendix~\ref{app:proofs}.

\begin{theorem}[Determinism]
\label{thm:determinism}
For a fixed graph state $G$ and operation $op$, the result of executing $op$ is deterministic and independent of agent implementation.
\end{theorem}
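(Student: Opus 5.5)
The plan is to show that the result of $op$ is a mathematical function of the pair $(G, op)$ alone, so no agent-dependent quantity (LLM sampling, random seeds, internal state, iteration order of the agent's data structures) can enter the computation. I will split into the two forms of Definition~\ref{def:graphop} and treat each by structural argument on the operational semantics.

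\textbf{Update operations.} For $op = \mathtt{UPDATE}(\Delta G)$, the resulting graph is $G' = ((V \cup \Delta V^+) \setminus \Delta V^-, (E \cup \Delta E^+) \setminus \Delta E^-, \Lambda, \Psi)$. Union and set difference are functions of their operands, so $G'$ is uniquely determined by $G$ and $\Delta G$. No choice is involved, and this case is immediate.

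\textbf{Traversal operations.} For $op = \mathtt{TRAVERSE}(V_s, \Psi_f, h, \mathit{ret})$ I will argue by induction on $h$ when $h \in \mathbb{N}$.
\begin{itemize}
\item \emph{Base case} ($h = 0$): the result is $\mathit{extract}(V_s, \emptyset)$.
\item \emph{Inductive step}: first observe that $N(V_s, \Psi_f)$ is defined by a set comprehension over $E$ and the fixed typing function $\psi$. It is therefore a function of $(G, V_s, \Psi_f)$. In particular, the order in which an implementation scans edges does not matter, because the result is a set. Hence $V_s \cup N(V_s, \Psi_f)$ is uniquely determined, and the induction hypothesis applied to the recursive call at depth $h-1$ gives a unique result.
\item \emph{Case $h = \infty$}: the sequence $V_s^{(0)} = V_s$, $V_s^{(k+1)} = V_s^{(k)} \cup N(V_s^{(k)}, \Psi_f)$ is monotone in the finite set $V$. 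It therefore stabilizes after at most $|V|$ steps, and I define the $h=\infty$ result as the value at this fixpoint, which is unique. This is the least fixpoint of $S \mapsto V_s \cup N(S, \Psi_f)$, i.e.\ the filtered reachability closure.
\end{itemize}

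\textbf{The main obstacle: $\mathit{extract}$.} The paper only says that $\mathit{extract}$ ``formats the result'', so I must fix its meaning for each return mode to make it a function of the reached set $R$ and $G$:
\begin{itemize}
\item $\mathtt{SUBGRAPH}$ returns the induced filtered subgraph $(R, \{(u,v)\in E : u,v \in R, \psi(u,v)\in\Psi_f\})$.
\item $\mathtt{LEAVES}$ returns the nodes of $R$ with no outgoing $\Psi_f$-edge inside $R$, or alternatively those at maximal hop distance. Either choice is fine as long as it is fixed as a set.
\item $\mathtt{PATHS}$ returns the set of all $\Psi_f$-typed paths from $V_s$ of length at most $h$. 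This is a set, not a sequence, so no tie-breaking is needed. For $h=\infty$ I will restrict to simple paths so that the set is finite.
\end{itemize}
Some modes could leak nondeterminism if the output were a serialized list. I will handle this by stating that results are compared as sets, or by imposing a canonical ordering by node identifier, which the shared graph guarantees is global by the discussion in Section~\ref{sec:heterogeneity}.

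\textbf{Implementation independence.} Any correct implementation of the protocol must compute these denotations. Two agents executing $op$ on the same $G$ therefore return identical results. The LLM's role is confined to \emph{choosing} $op$, which lies outside the scope of the theorem.
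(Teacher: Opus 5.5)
Your proposal is correct and follows essentially the same route as the paper's proof: updates are settled by set union and difference, and traversals by the deterministic neighborhood function $N$ applied through the recursion from a fixed base case, so the result depends only on $(G, op)$. Treating $h=\infty$ as a fixpoint and fixing $\mathit{extract}$ for each return mode are refinements the paper leaves implicit; one harmless slip is that your $\mathtt{PATHS}$ and the hop-distance variant of $\mathtt{LEAVES}$ need the original $V_s$ and $h$, not just the reached set $R$ as you claim, but since these are components of $op$, determinism is unaffected.
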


\emph{Proof sketch.} Traversal operations are recursive set expansions with deterministic neighborhood functions. Update operations are set-theoretic transformations. Neither involves stochastic processes. Full proof in Appendix~\ref{app:proofs}. $\square$

\begin{theorem}[Auditability]
\label{thm:auditability}
Any agent conclusion reachable through G²CP can be verified by replaying the message sequence over the graph.
\end{theorem}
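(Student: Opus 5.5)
The plan is to formalize ``reachable through G²CP'' and then argue by induction on the length of the message sequence, using Theorem~\ref{thm:determinism} at every step. I would model a G²CP run as an initial graph state $G_0$ together with a finite, totally ordered log of messages $m_1,\ldots,m_T$, each $m_t = \langle \mathit{sender}_t, \mathit{receiver}_t, \mathit{perf}_t, \mathit{op}_t, \mathit{ctx}_t\rangle$. The log induces a sequence of graph states: $G_t = \mathtt{UPDATE}(G_{t-1}, \Delta G_t)$ if $\mathit{op}_t$ is an update that is actually committed (performative UPDATE, or a CONFIRMed update), and $G_t = G_{t-1}$ otherwise. Each message also has a result $r_t$, namely $\mathcal{T}(\mathit{op}_t)$ evaluated on $G_{t-1}$ for traversals, or $G_t$ for updates. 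An agent conclusion is \emph{reachable through G²CP} if it is a function of results $r_{t_1},\ldots,r_{t_k}$ carried in INFORM messages of the log, possibly composed with a final step by the concluding agent. Replay means re-executing $\mathit{op}_1,\ldots,\mathit{op}_T$ in order, starting from $G_0$.

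The key steps are as follows.

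\begin{enumerate}
\item Prove by induction on $t$ that replay reconstructs exactly $G_t$ and $r_t$. The base case $G_0$ is given. For the inductive step, $G_{t-1}$ is reproduced, so Theorem~\ref{thm:determinism} implies that executing $\mathit{op}_t$ on it yields the same $r_t$. For updates, the set-theoretic formulas $V' = (V \cup \Delta V^+)\setminus \Delta V^-$ and $E' = (E \cup \Delta E^+)\setminus \Delta E^-$ yield the same $G_t$.
\item Conclude that every asserted INFORM result can be checked by comparing the claimed result with the replayed $r_t$. Any mismatch pinpoints the offending message and its sender, which gives the auditability claim.
\item Note that the verification cost is bounded by the total cost of the traversals. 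Each traversal terminates because the recursion for $\mathcal{T}$ decreases $h$; for $h=\infty$ I would interpret it as a least fixpoint, which is reached within $|V|$ steps since $V_s$ grows monotonically in the finite set $V$.
\end{enumerate}

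I expect the main obstacle to be scoping exactly what ``agent conclusion'' covers. Agents use LLMs, for example to select operations and to synthesize the final answer, and those steps are not deterministic. I would handle this by restricting the theorem to the graph-grounded content of conclusions, that is, to the subgraphs and paths that are cited. Replay then verifies that every cited fact was genuinely returned by the graph, even though the LLM's choice of \emph{which} operation to issue is not itself replayed; it is only logged.

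A second, smaller issue is concurrency: messages must be totally ordered, or updates must commute. I would assume a serialized log, as a blackboard-style shared graph provides. I would also require that $\mathit{ctx}$ and $G_0$ be recorded, so that the replay has a well-defined starting state.
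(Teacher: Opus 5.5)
Your proposal is correct and follows essentially the same route as the paper. The paper reconstructs the states $G_0,\ldots,G_k$ by leaving the graph unchanged on traversals and applying $\mathtt{UPDATE}$ on updates, invokes Theorem~\ref{thm:determinism} to reproduce each traversal result at its state, and treats any agent output inconsistent with the replayed subgraphs as identifying the offending agent and operation. Your explicit induction, the restriction to graph-grounded content, the serialized-log and recorded-context assumptions, and the committed-versus-rejected update distinction make precise what the paper's proof leaves implicit.
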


\emph{Proof sketch.} Each message specifies explicit graph operations. Given initial state $G_0$ and message sequence $\langle m_1, \ldots, m_k \rangle$, we reconstruct graph states $G_0, G_1, \ldots, G_k$ and verify consistency. Full proof in Appendix~\ref{app:proofs}. $\square$

\begin{theorem}[Completeness]
\label{thm:completeness}
G²CP can express any query answerable through graph traversal and retrieval-augmented generation.
\end{theorem}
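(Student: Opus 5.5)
To prove Theorem~\ref{thm:completeness}, I first pin down what ``a query answerable through graph traversal and retrieval-augmented generation'' means. My reading is this: a query $q$ is answerable in this sense if there is a procedure that (i) selects a set of seed entities $V_s \subseteq V$, either by symbolic identifiers or by embedding similarity over the node vectors $\mathbf{x}_i$, (ii) retrieves a context $C \subseteq G$ by expanding from $V_s$ along relationship types in some $\Psi_f \subseteq \Psi$ up to a bounded (or unbounded) depth, and (iii) passes $C$ to a generator that produces the answer. The generator may be an LLM, and it may be stochastic. The theorem is then the claim that G²CP can express the \emph{retrieval} part (i)--(ii) exactly, with the generation step (iii) applied locally by the receiving agent. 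Step (iii) never needs to be communicated, because by Theorem~\ref{thm:determinism} every agent holding $C$ sees the same context.

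\textbf{Expressing a single retrieval.} Let $R(V_s,\Psi_f,h)$ denote the subgraph reachable from $V_s$ by walks of length at most $h$ using only edges whose type lies in $\Psi_f$. I will show by induction on $h$ that the recursive definition of $\mathcal{T}$ yields exactly the vertex set of $R(V_s,\Psi_f,h)$.
\begin{itemize}
\item \emph{Base case} $h=0$: the operation returns $V_s$ itself.
\item \emph{Inductive step}: $V_s \cup N(V_s,\Psi_f)$ adds precisely the one-hop filtered successors, so applying the hypothesis to this enlarged seed set with $h-1$ hops gives all vertices reachable in $\le h$ hops.
\end{itemize}
For $h=\infty$ I use that $G$ is finite. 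The sequence of seed sets is monotone and bounded by $V$, so it stabilizes after at most $|V|$ steps at the filtered reachability closure. Choosing $\mathit{ret}$ as $\mathtt{SUBGRAPH}$, $\mathtt{PATHS}$ or $\mathtt{LEAVES}$ then covers the standard forms of retrieved context. Embedding-based seed selection is handled by letting the sender compute $V_s$ and place it explicitly in the message. Since $V_s$ is an arbitrary subset of $V$, this is always possible.

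\textbf{Composite queries.} Multi-stage RAG pipelines, such as iterative or multi-hop retrieval whose edge filters differ per stage, are handled by induction on the number of stages. Each stage is one REQUEST/INFORM pair, and the returned set serves as $V_s$ for the next stage. Unions of retrievals are realized by issuing several traversals and merging the results, which the synthesis agent does locally. Pipelines that write intermediate facts back to the graph are covered by $\mathtt{UPDATE}$.

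\textbf{Main obstacle.} The hard step is the definitional one: stating the class of ``RAG-answerable'' queries precisely enough that the claim is neither false nor vacuous. Filtering by attributes or timestamps $t_{ij}$ is not directly expressible through $\Psi_f$. I would either restrict the class to type-filtered reachability plus local generation, or argue that such predicates are applied by the receiver to the returned $\mathtt{SUBGRAPH}$, which already contains the attributes. I would state this restriction explicitly in the theorem's proof.
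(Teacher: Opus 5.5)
Your skeleton is the paper's own. Its proof also splits a graph-based task into node selection ($V_s$), relationship exploration ($\Psi_f$, $h$) and context extraction ($\mathit{ret}$). It composes multi-perspective queries by passing nodes from agent to agent and hands write-backs to $\mathtt{UPDATE}$. But it only asserts that these ``map directly'' to $\mathtt{TRAVERSE}$ and that $G_1 \cup G_2 \cup \cdots$ ``covers any query answerable from $G$'', and never fixes the class of queries. Your proposal supplies what the paper leaves implicit: the explicit class, the induction on $h$, the explicit-$V_s$ reduction for embedding-based seeds, and the stabilization argument for $h=\infty$. That last one is really a limit reading, since with $h=\infty$ the recursion never reaches its base case.

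One step is still asserted rather than proved, in your proposal as in the paper: that choosing $\mathit{ret}$ covers the standard forms of context. The recursion bottoms out in $\mathit{extract}(V_s,\emptyset)$, where $V_s$ is by then the accumulated reachable set. So the original seeds and the traversed edges are both gone, which is why your induction can (correctly) deliver only the vertex set of $R(V_s,\Psi_f,h)$. $\mathtt{SUBGRAPH}$ survives only under the reading ``$\Psi_f$-induced subgraph on the final vertices''. $\mathtt{PATHS}$ (and $\mathtt{LEAVES}$ read as the depth-$h$ frontier) is not determined at all. In a graph whose only $\Psi_f$-edge is $a\to b$, seeds $\{a\}$ with $h=1$ and seeds $\{a,b\}$ with $h=0$ both reduce to $\mathit{extract}(\{a,b\},\emptyset)$, yet their paths and frontiers differ. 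To close this, carry the original seeds and an accumulated edge set through the recursion. Then extend your induction to show that the accumulated edges are exactly those on $\Psi_f$-walks of length at most $h$ from the seeds.

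When you delimit the class, also record three things. First, $N$ follows only outgoing edges, so undirected $k$-hop neighborhoods are excluded unless $\Psi$ is closed under inverses. Second, edge weights $w_{ij}$ are as unfilterable as timestamps. Third, the restriction concerns predicates applied \emph{along} the traversal, since attribute-filtered seed sets are already covered by arbitrary $V_s$.

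Finally, keep your first remedy rather than the second. If ``expressing'' a context may mean returning a superset that the receiver filters locally, a single traversal with $V_s=V$ and $\Psi_f=\Psi$ ships all of $G$, and every context becomes expressible for a trivial reason.
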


\emph{Proof sketch.} Any graph-based reasoning task decomposes into node selection, edge traversal, and subgraph retrieval, each mapping to traversal operations. Complex queries compose via message sequences. Full proof in Appendix~\ref{app:proofs}. $\square$

\paragraph{Practical implication.}
For safety-critical applications (industrial maintenance, healthcare, aviation), these properties mean that every system recommendation can be independently audited by replaying the exact graph operations that produced it. Unlike free-text multi-agent systems where intermediate reasoning is opaque, G²CP provides a complete, deterministic, and verifiable audit trail from user query to final answer.

\subsection{Message Syntax}

We define a concrete serialization format for G²CP messages:

\begin{verbatim}
<sender:agent_id> TO <receiver:agent_id>
PERFORMATIVE: <perf>
CONVERSATION: <conv_id>
OPERATION:
  TRAVERSE
    FROM: <node_selector>
    VIA: <edge_types>
    DEPTH: <hop_count>
    RETURN: <format>
  [CONSTRAINTS: <additional_filters>]
\end{verbatim}

Node selectors support explicit IDs (\texttt{\{node\_123, node\_456\}}), type-based selection (\texttt{\{type:Fault, type:Component\}}), property-based filtering (\texttt{\{Symptom WHERE severity>0.8\}}), and contextual references (\texttt{\{CURRENT\_FOCUS\}}).

\subsection{Relationship to FIPA-ACL}
\label{sec:fipa_comparison}

G²CP operates at the same \textbf{communication layer} as FIPA-ACL---it defines how agents exchange messages, not how tasks are orchestrated. The architectural parallel is:

\begin{center}
\small
\begin{tabular}{@{}lll@{}}
\toprule
\textbf{Layer} & \textbf{FIPA-ACL} & \textbf{G²CP} \\
\midrule
Performative & INFORM, REQUEST, \ldots & INFORM, REQUEST, \ldots \\
Content lang. & SL / KIF predicates & Graph operations \\
Ontology & Agent-specific & Shared graph schema \\
Transport & ACL envelope & Kafka + signatures \\
\bottomrule
\end{tabular}
\end{center}

Both frameworks use performatives to express illocutionary force. The critical difference lies in the content language: FIPA uses logical predicates (e.g., \texttt{(price (good g1) 50)}) while G²CP uses graph operations (e.g., \texttt{TRAVERSE FROM \{g1\} VIA \{priced\_at\} DEPTH 1}). This substitution yields two advantages for LLM-based agents: (1)~graph operations are directly executable without a theorem prover, and (2)~results (subgraphs) naturally integrate with neural retrieval pipelines through node embeddings.

The advance for the MAS field is that G²CP bridges symbolic ACL principles---structured performatives, verifiable content, social commitments---with modern neural architectures. Symbolic ACLs like FIPA require agents to share logical ontologies and employ theorem provers for content interpretation, making them incompatible with LLM-based systems. G²CP preserves the formal guarantees (auditability, determinism) while using graph operations that LLM agents can generate, execute, and reason over natively.

\subsection{Commitment Semantics}
\label{sec:commitments}

Following Singh~\cite{singh1998semantics} and Fornara and Colombetti~\cite{fornara2004communicative}, we formalize the social commitments created by each G²CP performative. We write $C(x, y, p)$ for ``agent $x$ is committed to agent $y$ that condition $p$ holds.''

\paragraph{REQUEST$(A, B, \mathtt{TRAVERSE}(\ldots))$}

\begin{itemize}
\item \textbf{Pre-condition}: $A$ believes $B$ can execute the traversal (i.e., $B$'s role covers the required edge types).
\item \textbf{Post-condition}: $B$ commits to return the traversal result or a structured ERROR.
\item \textbf{Social commitment}: $C(B, A, \text{execute\_and\_return}(\mathit{op}))$.
\item \textbf{Verification}: $A$ can replay $\mathit{op}$ on $G$ and check result consistency.
\end{itemize}

\paragraph{INFORM$(A, B, \mathit{subgraph})$}

\begin{itemize}
\item \textbf{Pre-condition}: $A$ has executed an operation producing $\mathit{subgraph}$.
\item \textbf{Post-condition}: $A$ asserts that $\mathit{subgraph} \subseteq G$.
\item \textbf{Social commitment}: $C(A, B, \text{grounded}(\mathit{subgraph}, G))$.
\item \textbf{Verification}: $B$ checks $\mathit{subgraph} \subseteq G$ by re-executing the source operation.
\end{itemize}

\paragraph{QUERY$(A, B, \mathtt{TRAVERSE}(\ldots))$}

\begin{itemize}
\item \textbf{Pre-condition}: $A$ seeks existence information.
\item \textbf{Post-condition}: $B$ commits to truthful boolean response.
\item \textbf{Social commitment}: $C(B, A, \text{truthful\_response}(\mathit{op}))$.
\item \textbf{Verification}: $A$ replays $\mathit{op}$ and checks non-emptiness.
\end{itemize}

\paragraph{PROPOSE$(A, B, \mathit{op})$}

\begin{itemize}
\item \textbf{Pre-condition}: $A$ believes $\mathit{op}$ is relevant.
\item \textbf{Post-condition}: $B$ commits to evaluate and respond with CONFIRM or REJECT.
\item \textbf{Social commitment}: $C(B, A, \text{evaluate\_and\_respond}(\mathit{op}))$.
\end{itemize}

\paragraph{CONFIRM$(A, B, \mathit{result})$}

\begin{itemize}
\item \textbf{Pre-condition}: $A$ has verified $B$'s previous result.
\item \textbf{Post-condition}: $A$ endorses $\mathit{result}$ as correct.
\item \textbf{Social commitment}: $C(A, B, \text{verified}(\mathit{result}))$.
\end{itemize}

\paragraph{REJECT$(A, B, \mathit{op})$}

\begin{itemize}
\item \textbf{Pre-condition}: $A$ detects constraint violation in $B$'s operation or result.
\item \textbf{Post-condition}: $B$ must not act on the rejected result.
\item \textbf{Social commitment}: $C(A, B, \text{violated}(\mathit{op}, \mathit{constraint}))$.
\end{itemize}

\paragraph{UPDATE$(A, B, \Delta G)$}

\begin{itemize}
\item \textbf{Pre-condition}: $A$ is authorized for graph modification.
\item \textbf{Post-condition}: $B$ commits to apply $\Delta G$ after validation.
\item \textbf{Social commitment}: $C(B, A, \text{apply\_if\_valid}(\Delta G))$.
\item \textbf{Verification}: $A$ queries updated graph to confirm changes.
\end{itemize}

All commitments are publicly observable through the audit log, satisfying Singh's~\cite{singh1998semantics} criterion that meaning should be grounded in social facts rather than private mental states.


\section{Multi-Agent Architecture}
\label{sec:architecture}

\subsection{Agent Roles and Specializations}

We instantiate G²CP within an industrial knowledge management system with four specialized agents:

\paragraph{Diagnostic Agent ($A_D$):} Performs root cause analysis by traversing symptom-to-fault relationships. \\Specializes in $\Psi_{\text{diag}} = \{\mathtt{causes}, \mathtt{indicates}, \mathtt{correlates\_with}\}$.

\paragraph{Procedural Agent ($A_P$):} Retrieves maintenance procedures by traversing fault-to-action relationships. \\Specializes in $\Psi_{\text{proc}} = \{\mathtt{addressed\_by}, \mathtt{requires}, \mathtt{precedes}\}$.

\paragraph{Synthesis Agent ($A_S$):} Discovers patterns through historical data traversal. \\Specializes in $\Psi_{\text{synth}} = \{\mathtt{occurred\_in}, \mathtt{replaced\_in}, \mathtt{failed\_after}\}$.

The Synthesis agent discovers new relationships through \textbf{temporal graph traversal over historical work order data}. Specifically, $A_S$ computes co-occurrence frequencies between fault events and contextual conditions. For each pair $(f, c)$ where $f$ is a fault node and $c$ is a condition node (sensor reading, environmental factor), $A_S$ traverses all historical work orders containing $f$ and checks for temporal co-occurrence of $c$ within a configurable time window (default: 48 hours). If the co-occurrence frequency exceeds a threshold (e.g., $f=$``Part X failure'' and $c=$``temp\_anomaly'' co-occur in 15 out of 20 cases), $A_S$ proposes a new edge with confidence $15/20 = 0.75$ via an UPDATE message. This mechanism enables continuous knowledge graph enrichment from operational data.

\paragraph{Ingestion Agent ($A_I$):} Updates the graph based on new information. Uses UPDATE operations to maintain $G$.

\subsection{Coordination Protocol}

Figure~\ref{fig:sequence} shows a typical interaction sequence for a complex diagnostic query.

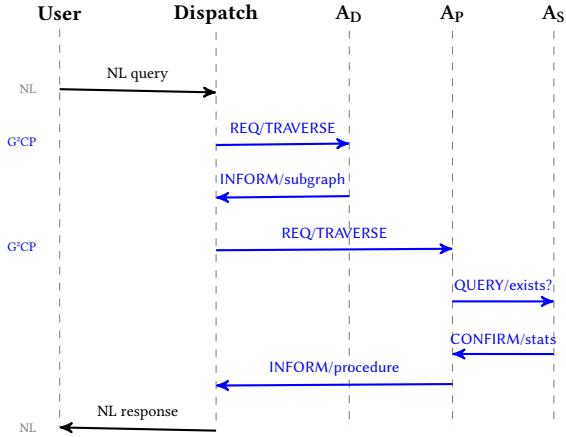
\begin{figure}[t]
\centering
\begin{tikzpicture}[node distance=1.5cm, >=stealth', font=\small]
  \node (user) {\textbf{User}};
  \node[right=1.0cm of user] (disp) {\textbf{Dispatch}};
  \node[right=0.8cm of disp] (diag) {$\mathbf{A_D}$};
  \node[right=0.8cm of diag] (proc) {$\mathbf{A_P}$};
  \node[right=0.8cm of proc] (synth) {$\mathbf{A_S}$};
  
  \foreach \n in {user,disp,diag,proc,synth}{
    \draw[dashed, gray] (\n) -- ++(0,-5.5);
  }
  
  \draw[->, thick] ([yshift=-0.8cm]user.south) -- node[above, font=\scriptsize] {NL query} ([yshift=-0.8cm]disp.south);
  \draw[->, thick, blue] ([yshift=-1.5cm]disp.south) -- node[above, font=\scriptsize] {\textsf{REQ/TRAVERSE}} ([yshift=-1.5cm]diag.south);
  \draw[->, thick, blue] ([yshift=-2.2cm]diag.south) -- node[above, font=\scriptsize] {\textsf{INFORM/subgraph}} ([yshift=-2.2cm]disp.south);
  \draw[->, thick, blue] ([yshift=-2.9cm]disp.south) -- node[above, font=\scriptsize] {\textsf{REQ/TRAVERSE}} ([yshift=-2.9cm]proc.south);
  \draw[->, thick, blue] ([yshift=-3.6cm]proc.south) -- node[above, font=\scriptsize] {\textsf{QUERY/exists?}} ([yshift=-3.6cm]synth.south);
  \draw[->, thick, blue] ([yshift=-4.3cm]synth.south) -- node[above, font=\scriptsize] {\textsf{CONFIRM/stats}} ([yshift=-4.3cm]proc.south);
  \draw[->, thick, blue] ([yshift=-4.7cm]proc.south) -- node[above, font=\scriptsize] {\textsf{INFORM/procedure}} ([yshift=-4.7cm]disp.south);
  \draw[->, thick] ([yshift=-5.3cm]disp.south) -- node[above, font=\scriptsize] {NL response} ([yshift=-5.3cm]user.south);
  
  \node[left, font=\tiny, gray] at ([yshift=-0.8cm, xshift=-0.2cm]user.south) {NL};
  \node[left, font=\tiny, blue] at ([yshift=-1.5cm, xshift=-0.2cm]user.south) {G²CP};
  \node[left, font=\tiny, blue] at ([yshift=-2.9cm, xshift=-0.2cm]user.south) {G²CP};
  \node[left, font=\tiny, gray] at ([yshift=-5.3cm, xshift=-0.2cm]user.south) {NL};
\end{tikzpicture}
\caption{G²CP message sequence for a diagnostic query. Natural language (black) is used only at the user boundary; all inter-agent communication (blue) uses G²CP graph operations. Labels: REQ = REQUEST.}
\label{fig:sequence}
\end{figure}

\paragraph{Phase 1: Query Decomposition.}
Upon receiving a user query $q$, the Dispatcher uses the LLM to extract entities, classify intent, and select the primary agent (see Section~\ref{sec:llm_selection}). For diagnostic queries, it extracts symptoms $S = \{s_1, \ldots, s_k\}$ and constructs:

\begin{verbatim}
REQUEST TRAVERSE 
  FROM: {s_1, ..., s_k}
  VIA: {causes, indicates}
  DEPTH: 2
  RETURN: SUBGRAPH
\end{verbatim}

\paragraph{Phase 2: Iterative Refinement.}
Agents can REQUEST operations from each other. After $A_D$ identifies fault $f^*$, it sends to $A_P$:

\begin{verbatim}
REQUEST TRAVERSE
  FROM: {f*}
  VIA: {addressed_by}
  DEPTH: 1
  RETURN: SUBGRAPH
\end{verbatim}

\paragraph{Phase 3: Knowledge Integration.}
When $A_S$ discovers a new pattern, it broadcasts:

\begin{verbatim}
UPDATE APPLY
  ADD_EDGE: {
    from: Part:X,
    to: Sensor:temp_anomaly,
    type: risk_indicator,
    confidence: 0.89
  }
\end{verbatim}

\subsubsection{Complete Worked Example}
\label{sec:worked_example}

We trace a full message exchange for the query: \emph{``What causes pressure drops in hydraulic circuit HC-3?''}

\textbf{Step 1.} Dispatcher $\to$ $A_D$:
\begin{verbatim}
Dispatcher TO A_D
PERFORMATIVE: REQUEST
CONVERSATION: conv_042
OPERATION:
  TRAVERSE
    FROM: {Component:HC-3}
    VIA: {has_symptom}
    DEPTH: 1
    RETURN: SUBGRAPH
\end{verbatim}

\textbf{Step 2.} $A_D$ executes the traversal on Neo4j, retrieving symptoms linked to HC-3: \texttt{Symptom:pressure\_drop}, \texttt{Symptom:flow\_reduction}. $A_D$ then performs a second traversal to find root causes:

$A_D$ $\to$ Dispatcher:
\begin{verbatim}
A_D TO Dispatcher
PERFORMATIVE: INFORM
CONVERSATION: conv_042
OPERATION:
  RESULT SUBGRAPH:
    Nodes: {Fault:seal_degradation, 
            Fault:valve_blockage,
            Fault:pump_cavitation}
    Edges: {seal_degradation -[causes]-> 
              pressure_drop,
            valve_blockage -[causes]-> 
              flow_reduction,
            pump_cavitation -[causes]-> 
              pressure_drop}
    Confidence: [0.91, 0.84, 0.72]
\end{verbatim}

\textbf{Step 3.} Dispatcher $\to$ $A_P$ (for top-ranked fault):
\begin{verbatim}
Dispatcher TO A_P
PERFORMATIVE: REQUEST
CONVERSATION: conv_042
OPERATION:
  TRAVERSE
    FROM: {Fault:seal_degradation}
    VIA: {addressed_by, requires_part}
    DEPTH: 1
    RETURN: SUBGRAPH
\end{verbatim}

\textbf{Step 4.} $A_P$ $\to$ $A_S$ (checking historical frequency):
\begin{verbatim}
A_P TO A_S
PERFORMATIVE: QUERY
CONVERSATION: conv_042
OPERATION:
  TRAVERSE
    FROM: {Fault:seal_degradation, 
           Component:HC-3}
    VIA: {occurred_in}
    DEPTH: 2
    RETURN: LEAVES
\end{verbatim}

\textbf{Step 5.} $A_S$ returns historical occurrence data (7 incidents in 18 months). $A_P$ integrates this into the procedure recommendation. Dispatcher generates the final natural language response from the collected subgraphs.

\textbf{Total tokens}: 189 (vs.\ estimated 1,400+ for equivalent FTMA exchange).

\subsection{Implementation Details}

Each agent maintains a local graph cache (a materialized view of $G$ optimized for its edge specialization), a message queue (pending G²CP messages with priority ordering), an execution engine (Cypher-based traversal executor), and an LLM interface (used only for parsing user queries and formatting final responses, not for inter-agent communication).

The system uses Neo4j for graph storage with custom traversal APIs. Message passing occurs through Apache Kafka for reliability and auditability.

\subsection{G²CP Runtime Engine}

We detail the implementation architecture that executes G²CP messages.

\paragraph{Message Parser.}
Each agent runs a G²CP parser implementing a three-stage pipeline:

\begin{algorithm}[h]
\caption{G²CP Message Parsing and Execution}
\label{alg:parsing}
\small
\begin{algorithmic}[1]
\STATE \textbf{Input:} Raw message $m_{raw}$
\STATE \textbf{Output:} Execution result $R$ or error
\STATE // Stage 1: Parse and validate syntax
\STATE $m \gets \text{ParseMessage}(m_{raw})$
\IF{$m = \text{INVALID}$}
    \RETURN $\text{ERROR}(\text{``Malformed message''})$
\ENDIF
\STATE // Stage 2: Security validation
\IF{$\neg \text{AuthorizeAgent}(m.\mathit{sender}, m.\mathit{op})$}
    \RETURN $\text{ERROR}(\text{``Unauthorized operation''})$
\ENDIF
\STATE // Stage 3: Execute operation
\IF{$m.\mathit{op}$ is TRAVERSE}
    \STATE $V_s \gets \text{ResolveNodeSelector}(m.\mathit{op}.from)$
    \STATE $R \gets \text{ExecuteTraversal}(V_s, m.\mathit{op}.via, m.\mathit{op}.depth)$
\ELSIF{$m.\mathit{op}$ is UPDATE}
    \STATE $R \gets \text{ApplyUpdate}(G, m.\mathit{op}.\Delta G)$
\ENDIF
\STATE // Stage 4: Log for auditability
\STATE $\text{AuditLog}.\text{append}(m, R, \text{timestamp})$
\RETURN $R$
\end{algorithmic}
\end{algorithm}

\paragraph{Node Resolution.}
Node selectors use a priority system: (1)~Explicit IDs via indexed lookup, (2)~Type filters via Cypher \texttt{MATCH (n:Type) RETURN n}, (3)~Property filters compiled to parameterized Cypher with sanitization, (4)~Context references resolved from conversation state.

\paragraph{Traversal Executor.}
Graph traversals use Neo4j's pattern matching:

\begin{verbatim}
MATCH path = (start)-[r*1..{depth}]->(end)
WHERE start IN {node_set}
  AND type(r) IN {edge_filter}
RETURN path
\end{verbatim}

For large-scale traversals ($h > 2$), breadth-first frontier expansion with early termination at 1000 nodes is employed.

\paragraph{Error Handling.}
The executor implements timeout protection (30s), result size limits (5000 nodes), malformed operation reporting, and graph inconsistency handling with retry or human escalation.

\subsection{LLM-Based Operation Selection}
\label{sec:llm_selection}

A key clarification: agents are \textbf{not pre-scripted}. Each agent uses LLM-based reasoning to dynamically select graph operations at runtime. Only the graph schema (node and edge types) is pre-defined---identical across all baselines. Algorithm~\ref{alg:llm_selection} details the selection process.

\begin{algorithm}[t]
\caption{LLM-Based Graph Operation Selection}
\label{alg:llm_selection}
\small
\begin{algorithmic}[1]
\STATE \textbf{Input:} User query $q$ or upstream G²CP message $m$
\STATE \textbf{Output:} G²CP operation $\mathit{op}$
\STATE
\STATE // Step 1: Entity extraction via LLM
\STATE $\mathit{entities} \gets \text{LLM}(\text{EXTRACT\_PROMPT}, q)$
\STATE \COMMENT{e.g., $\{$``bearing B-4521'', ``grinding noise''$\}$}
\STATE
\STATE // Step 2: Entity linking to graph nodes
\STATE $V_s \gets \emptyset$
\FOR{each $e \in \mathit{entities}$}
    \STATE $v \gets \text{FuzzyMatch}(e, V, \text{threshold}=0.85)$
    \IF{$v \neq \text{NULL}$}
        \STATE $V_s \gets V_s \cup \{v\}$
    \ENDIF
\ENDFOR
\STATE
\STATE // Step 3: Intent classification
\STATE $\mathit{intent} \gets \text{LLM}(\text{INTENT\_PROMPT}, q)$
\STATE \COMMENT{One of: diagnostic, procedural, predictive, factoid}
\STATE
\STATE // Step 4: Edge type selection based on intent
\STATE $\Psi_f \gets \text{EDGE\_MAP}[\mathit{intent}]$
\STATE \COMMENT{e.g., diagnostic $\to$ \{causes, indicates\}}
\STATE
\STATE // Step 5: Depth estimation
\STATE $h \gets \text{LLM}(\text{DEPTH\_PROMPT}, q, |V_s|, |\Psi_f|)$
\STATE \COMMENT{Typically 1--3 based on query complexity}
\STATE
\STATE // Step 6: Assemble G²CP operation
\STATE $\mathit{op} \gets \mathtt{TRAVERSE}(V_s, \Psi_f, h, \mathit{ret})$
\RETURN $\mathit{op}$
\end{algorithmic}
\end{algorithm}

\paragraph{Example Prompts.}
The entity extraction prompt used by the Dispatcher:
\begin{quote}
\small\texttt{Given the query: "\{query\}"\\
Extract all named entities that correspond to industrial equipment, symptoms, faults, or parts. Return as JSON: \{"entities": [...]\}}
\end{quote}

The intent classification prompt:
\begin{quote}
\small\texttt{Classify the query intent into one of: diagnostic (symptom$\to$cause), procedural (fault$\to$fix), predictive (pattern$\to$forecast), factoid (direct lookup). Query: "\{query\}" Intent:}
\end{quote}

The depth estimation prompt:
\begin{quote}
\small\texttt{Given \{n\_entities\} source entities and \{n\_edge\_types\} edge types, estimate the number of traversal hops (1-3) needed. Simple lookups: 1. Causal chains: 2. Complex multi-factor analysis: 3. Query: "\{query\}" Depth:}
\end{quote}


\section{Experimental Evaluation}
\label{sec:experiments}

\subsection{Experimental Setup}

\subsubsection{Knowledge Graph}
We constructed a synthetic industrial knowledge base for a ``Turbomatic MKII Hydraulic Press'' containing 1,247 nodes across 7 types (Components, Faults, Procedures, Work Orders, Parts, Sensors, Safety Protocols), 3,892 edges across 12 relationship types, with graph density 0.005, average degree 6.24, and diameter 8.

\subsubsection{Knowledge Graph Construction Methodology}
\label{sec:kg_construction}

The knowledge graph was constructed through a four-stage pipeline:

\begin{enumerate}
\item \textbf{Manual extraction} (Stage 1): Two domain engineers extracted 89 components and 45 fault types from the Turbomatic MKII technical manual (312 pages), identifying part hierarchies, known failure modes, and safety protocols.

\item \textbf{Work order integration} (Stage 2): We generated 200 synthetic work orders modeled on real maintenance logs from our industrial partners, covering a 3-year operational period. Each work order links symptoms, diagnosed faults, applied procedures, and replaced parts.

\item \textbf{Relationship inference} (Stage 3): Co-occurrence analysis over work orders produced 1,200+ candidate edges (e.g., if ``seal degradation'' and ``pressure drop'' co-occur in $>$60\% of relevant work orders, an \texttt{indicates} edge is created).

\item \textbf{Expert validation} (Stage 4): Three maintenance technicians reviewed all extracted entities and relationships, correcting 127 errors and adding 89 missing edges. Inter-annotator agreement (Fleiss' $\kappa$) was 0.81.
\end{enumerate}

\subsubsection{Representative Queries}
\label{sec:representative_queries}

Table~\ref{tab:queries} presents representative queries from each category with difficulty metrics.

\begin{table*}[t]
\centering
\caption{Representative queries across categories with difficulty metrics and expected behavior}
\label{tab:queries}
\small
\begin{tabular}{@{}clccp{4.5cm}@{}}
\toprule
\textbf{Cat.} & \textbf{Example Query} & \textbf{Hops} & \textbf{Entities} & \textbf{Failure Mode (FTMA)} \\
\midrule
\multirow{2}{*}{Fact.} & What is the rated pressure of pump P-101? & 1 & 1 & Retrieves wrong pump \\
& List all sensors on hydraulic circuit HC-3 & 1 & 1 & Incomplete listing \\
\midrule
\multirow{2}{*}{Diag.} & Why is there grinding noise at 1200 RPM? & 2 & 1 & Ambiguous fault identification \\
& Diagnose: pressure drop + high temp in circuit 7 & 2 & 3 & Wrong causal chain \\
\midrule
\multirow{2}{*}{Proc.} & How do I replace bearing B-4521? & 1 & 1 & Wrong bearing procedure \\
& Full procedure for seal kit replacement on HP-2 & 2 & 2 & Missing safety steps \\
\midrule
\multirow{2}{*}{Rel.} & Which faults affect both circuit HC-3 and HC-7? & 2 & 2 & Misses shared components \\
& What parts are needed for all pending work orders? & 3 & $5+$ & Incomplete aggregation \\
\midrule
\multirow{2}{*}{Pred.} & Predict next likely failure for pump P-101 & 3 & 1 & Hallucinated prediction \\
& Which components are at risk given sensor trends? & 3 & $3+$ & No historical grounding \\
\bottomrule
\end{tabular}
\end{table*}

\subsubsection{Baseline Implementation Details}
\label{sec:baseline_details}

All four systems share identical infrastructure: Neo4j graph database, GPT-4 for query understanding, Llama 3 70B for response generation, and access to the same knowledge graph via Cypher queries. The \textbf{only difference} is the inter-agent communication format. Below we provide the core system prompts for each baseline.

\paragraph{FTMA (Free-Text Multi-Agent).}
Agents communicate in natural language. Diagnostic Agent system prompt:
\begin{quote}
\small\texttt{You are a diagnostic specialist. You have access to a graph\_query tool that executes Cypher queries on the industrial knowledge graph. When another agent asks you to diagnose a problem, analyze the symptoms, query the graph for relevant fault information, and respond in natural language with your findings.}
\end{quote}

\paragraph{JSMA (JSON-Structured Multi-Agent).}
Agents exchange JSON objects. Diagnostic Agent system prompt:
\begin{quote}
\small\texttt{You are a diagnostic specialist. You have access to a graph\_query tool that executes Cypher queries. When receiving a JSON message like \{"action": "diagnose", "symptoms": [...]\}, query the graph and respond with \{"faults": [...], "confidence": [...]\}.}
\end{quote}

\paragraph{G²CP.}
Diagnostic Agent system prompt:
\begin{quote}
\small\texttt{You are a diagnostic specialist. You receive G²CP messages containing TRAVERSE operations over the knowledge graph. Execute the specified graph operation exactly as described. Return results as an INFORM message with the retrieved subgraph. Select edge types from your specialization: \{causes, indicates, correlates\_with\}.}
\end{quote}

\paragraph{Single-Agent Baseline.}
A monolithic RAG system with no agent decomposition, using the same graph and LLMs with a single prompt combining all agent capabilities.

Complete prompts for all four agent roles across all systems are provided in Appendix~\ref{app:prompts}.

\paragraph{Evaluation Metrics.}
\textbf{Task Completion Accuracy}: fraction of queries answered correctly (exact match on ground truth). \textbf{Token Efficiency}: total tokens consumed across all inter-agent messages (excluding internal reasoning; see Section~\ref{sec:token_counting}). \textbf{Hallucination Rate}: percentage of claims unsupported by graph evidence. \textbf{Cascading Error Rate}: fraction of failures caused by miscommunication between agents. \textbf{Auditability Score}: percentage of reasoning steps verifiable from message logs.

\subsection{Overall Performance}

Table~\ref{tab:overall} presents aggregate results across all 500 queries.

\begin{table}[t]
\centering
\caption{Overall system performance comparison. Bold indicates best. All G²CP improvements over FTMA are statistically significant ($p < 0.001$, paired $t$-test).}
\label{tab:overall}
\small
\begin{tabular}{@{}lcccc@{}}
\toprule
\textbf{Metric} & \textbf{FTMA} & \textbf{JSMA} & \textbf{Single} & \textbf{G²CP} \\
\midrule
Task Accuracy & 0.67 & 0.74 & 0.71 & \textbf{0.90} \\
Token Efficiency & 2,847 & 2,134 & 1,456 & \textbf{768} \\
Hallucination Rate & 0.23 & 0.18 & 0.14 & \textbf{0.02} \\
Cascading Errors & 0.31 & 0.19 & 0.00 & \textbf{0.00} \\
Auditability & 0.42 & 0.68 & 1.00 & \textbf{1.00} \\
Avg Response Time (s) & 4.2 & 3.8 & 2.1 & 2.9 \\
\bottomrule
\end{tabular}
\end{table}

\paragraph{Accuracy Improvement.}
G²CP achieves 90\% task completion accuracy, a 34\% relative improvement over FTMA (67\%). The improvement is most pronounced on relational and predictive queries requiring multi-hop reasoning (Figure~\ref{fig:accuracy_by_category}).

\paragraph{Token Efficiency.}
G²CP consumes 768 tokens per query on average, a 73\% reduction compared to FTMA (2,847 tokens). This stems from replacing verbose natural language exchanges with compact graph operations. For example:

\textbf{FTMA exchange} (287 tokens):
\begin{quote}
\small\textit{Agent A: ``I've identified that the grinding noise at 1200 RPM combined with pressure fluctuations suggests bearing wear in the main pump assembly. Can you help me find the appropriate repair procedure?''\\
Agent B: ``Thank you for the diagnosis. I'll search for repair procedures related to bearing replacement in the main pump assembly\ldots''}
\end{quote}

\textbf{G²CP exchange} (41 tokens):
\begin{verbatim}
REQUEST TRAVERSE FROM {Fault:bearing_wear_B4521}
VIA {addressed_by} DEPTH 1 RETURN SUBGRAPH
\end{verbatim}

\paragraph{Hallucination Elimination.}
G²CP reduces hallucinations to 2\%, compared to 23\% in FTMA. G²CP eliminates fabrication through deterministic graph operations---agents cannot ``make up'' edges that do not exist in $G$.

\paragraph{Cascading Error Elimination.}
FTMA exhibits 31\% cascading error rate. G²CP's explicit node references eliminate referential ambiguity entirely.

\subsection{Performance by Query Category}

Figure~\ref{fig:accuracy_by_category} shows accuracy breakdown by query type.

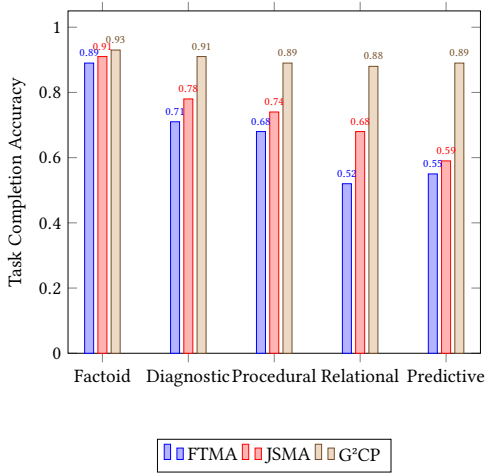
\begin{figure}[t]
\centering
\begin{tikzpicture}[scale=0.8]
  \begin{axis}[
    ybar,
    bar width=0.15cm,
    ylabel={Task Completion Accuracy},
    symbolic x coords={Factoid, Diagnostic, Procedural, Relational, Predictive},
    xtick=data,
    legend style={at={(0.5,-0.25)}, anchor=north, legend columns=3},
    ymin=0, ymax=1.05,
    nodes near coords,
    every node near coord/.append style={font=\tiny},
  ]
    \addplot coordinates {(Factoid,0.89) (Diagnostic,0.71) (Procedural,0.68) (Relational,0.52) (Predictive,0.55)};
    \addplot coordinates {(Factoid,0.91) (Diagnostic,0.78) (Procedural,0.74) (Relational,0.68) (Predictive,0.59)};
    \addplot coordinates {(Factoid,0.93) (Diagnostic,0.91) (Procedural,0.89) (Relational,0.88) (Predictive,0.89)};
    \legend{FTMA, JSMA, G²CP}
  \end{axis}
\end{tikzpicture}
\caption{Task completion accuracy by query category}
\label{fig:accuracy_by_category}
\end{figure}

G²CP shows largest improvements on relational (69\% relative gain) and predictive queries (62\% gain), where multi-hop reasoning and pattern discovery are essential.

\subsection{Ablation Studies}

Table~\ref{tab:ablation} analyzes the contribution of G²CP components.

\begin{table}[t]
\centering
\caption{Ablation study: impact of removing G²CP features}
\label{tab:ablation}
\small
\begin{tabular}{@{}lcc@{}}
\toprule
\textbf{Configuration} & \textbf{Accuracy} & \textbf{Token Usage} \\
\midrule
Full G²CP & \textbf{0.90} & \textbf{768} \\
\midrule
$-$ Explicit node IDs & 0.82 & 891 \\
$-$ Edge type constraints & 0.78 & 823 \\
$-$ Hop depth specification & 0.74 & 1,234 \\
$-$ Subgraph return format & 0.85 & 945 \\
$-$ Context tracking & 0.81 & 812 \\
\bottomrule
\end{tabular}
\end{table}

Removing hop depth specification causes the largest degradation (18\% accuracy drop), as agents default to exhaustive traversal, retrieving irrelevant context.

\subsection{Case Study: Complex Diagnostic Query}
\label{sec:case_study}

We analyze the query: \textit{``The hydraulic press is making grinding noise at 1200 RPM with pressure fluctuations and 85°C oil temperature. What's wrong and how do I fix it?''}

\paragraph{FTMA Execution (Failed).}
Dispatcher $\to$ $A_D$: ``Analyze symptoms: grinding noise, pressure issues, high temperature.'' $A_D$ $\to$ Dispatcher: ``Possible causes include pump bearing wear or cavitation.'' Dispatcher $\to$ $A_P$: ``Find procedures for pump bearing or cavitation.'' $A_P$ retrieves two procedures, unsure which applies. Final response: generic troubleshooting (incorrect). Total tokens: 3,124.

\paragraph{G²CP Execution (Success).}
We show the complete knowledge access pipeline:

\textbf{Step 1: Entity extraction.} LLM extracts: \texttt{\{grinding\_1200RPM, pressure\_fluctuation, temp\_85C\}}.

\textbf{Step 2: Entity linking.} Fuzzy matching links to graph nodes: \texttt{Symptom:grinding\_1200RPM} (score 0.97), \texttt{Symptom:pressure\_fluctuation} (0.94), \texttt{Symptom:temp\_85C} (0.99).

\textbf{Step 3: Operation selection.} Intent classified as ``diagnostic'' $\to$ edge types \texttt{\{causes, indicates\}}, depth estimated as 2.

\textbf{Step 4: Traversal execution.}
\begin{verbatim}
REQUEST TRAVERSE
  FROM {Symptom:grinding_1200RPM, 
        Symptom:pressure_fluctuation,
        Symptom:temp_85C}
  VIA {causes, indicates}
  DEPTH 2
  RETURN PATHS
\end{verbatim}

\textbf{Step 5: Result subgraph.} The traversal returns convergent paths:
\begin{verbatim}
grinding_1200RPM -[indicates]-> bearing_wear
                 -[located_in]-> B-4521
temp_85C -[causes]-> lubrication_failure 
         -[leads_to]-> bearing_wear
pressure_fluctuation -[indicates]-> bearing_wear
\end{verbatim}

All three paths converge on \texttt{Fault:bearing\_wear\_B4521} (confidence 0.94).

\textbf{Step 6: Procedure retrieval.}
\begin{verbatim}
REQUEST TRAVERSE
  FROM {Fault:bearing_wear_B4521}
  VIA {addressed_by, requires_part, 
       has_safety_protocol}
  DEPTH 1
  RETURN SUBGRAPH
\end{verbatim}

Returns procedure P-205 with parts list and safety protocols. Total tokens: 623. \textbf{Every claim in the final response maps to specific graph paths.}

\subsubsection{Query-to-G²CP Translation Pipeline}
\label{sec:translation_pipeline}

Algorithm~\ref{alg:translation} formalizes the complete pipeline from natural language query to G²CP operations.

\begin{algorithm}[t]
\caption{Query-to-G²CP Translation}
\label{alg:translation}
\small
\begin{algorithmic}[1]
\STATE \textbf{Input:} Natural language query $q$
\STATE \textbf{Output:} Sequence of G²CP messages $\langle m_1, \ldots, m_k \rangle$
\STATE \textit{// Entity extraction via LLM}
\STATE $E_{\text{raw}} \gets \text{LLM}(\text{EXTRACT}, q)$
\STATE \textit{// Entity linking to graph nodes}
\STATE $V_s \gets \emptyset$
\FOR{each $e \in E_{\text{raw}}$}
    \STATE $\mathbf{x}_e \gets \text{Encode}(e)$ \COMMENT{sentence embedding}
    \STATE $v^* \gets \arg\max_{v \in V} \cos(\mathbf{x}_e, \mathbf{x}_v)$
    \STATE \textit{// threshold $\tau = 0.85$}
    \IF{$\cos(\mathbf{x}_e, \mathbf{x}_{v^*}) > \tau$}
        \STATE $V_s \gets V_s \cup \{v^*\}$
    \ENDIF
\ENDFOR
\STATE \textit{// Intent classification}
\STATE $\mathit{intent} \gets \text{LLM}(\text{CLASSIFY}, q)$
\STATE \textit{// Edge selection based on intent}
\STATE $\Psi_f \gets \text{EDGE\_MAP}[\mathit{intent}]$
\STATE \textit{// Depth estimation based on complexity}
\STATE $h \gets \text{LLM}(\text{DEPTH}, q, |V_s|)$
\STATE \textit{// Generate primary operation}
\STATE $m_1 \gets \langle \text{Dispatch}, A_{\mathit{intent}}, \text{REQUEST}, $
\STATE $\quad\quad\quad \mathtt{TRAVERSE}(V_s, \Psi_f, h, \text{SUBGRAPH}), \mathit{ctx} \rangle$
\STATE \textit{// Downstream operations generated iteratively based on results of $m_1$}
\RETURN $\langle m_1, \ldots \rangle$
\end{algorithmic}
\end{algorithm}

\subsection{Scalability Analysis}

Figure~\ref{fig:scalability} shows response time scaling with graph size.

\begin{figure}[t]
\centering
\begin{tikzpicture}[scale=0.9]
  \begin{axis}[
    xlabel={Knowledge Graph Size (nodes)},
    ylabel={Response Time (seconds)},
    legend pos=north west,
    xmode=log,
    ymin=0,
    grid=major,
  ]
    \addplot[mark=square, blue, thick] coordinates {
      (1000,1.2) (5000,2.1) (10000,3.4) (50000,4.8) (100000,9.2)
    };
    \addplot[mark=triangle, red, thick] coordinates {
      (1000,1.8) (5000,3.9) (10000,8.1) (50000,21.3) (100000,47.8)
    };
    \legend{G²CP, FTMA}
  \end{axis}
\end{tikzpicture}
\caption{Response time scaling with knowledge graph size. G²CP maintains sub-linear scaling ($O(n^{0.7})$); FTMA scales super-linearly ($O(n^{1.3})$).}
\label{fig:scalability}
\end{figure}
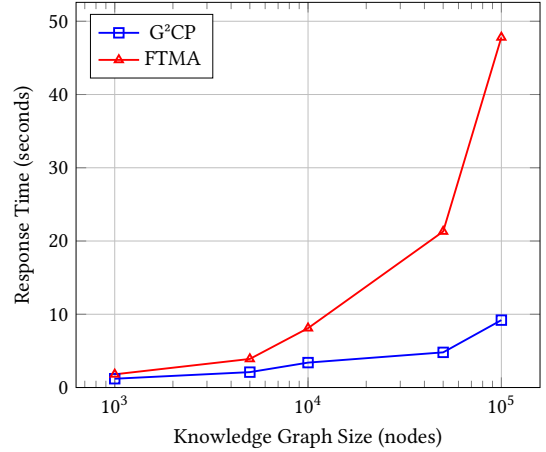

At 100,000 nodes, G²CP is 5.2$\times$ faster than FTMA.

\subsection{Human Evaluation}

We conducted a user study with industrial maintenance technicians who evaluated system outputs on clarity, completeness, and trustworthiness (1--5 Likert scale):

\begin{table}[h]
\centering
\small
\begin{tabular}{@{}lccc@{}}
\toprule
\textbf{System} & \textbf{Clarity} & \textbf{Completeness} & \textbf{Trust} \\
\midrule
FTMA & 3.2 & 2.9 & 2.7 \\
G²CP & 4.1 & 4.3 & 4.5 \\
\bottomrule
\end{tabular}
\end{table}

Technicians particularly valued G²CP's auditability and specificity (exact part numbers and procedures rather than generic advice).

\subsection{Real-World Industrial Validation}

To validate beyond synthetic benchmarks, we conducted a pilot deployment with two industrial partners: a hydraulic equipment manufacturer and an automotive parts supplier. We collected 25 real diagnostic cases from their maintenance logs spanning 9 months of operations.

\paragraph{Dataset Characteristics.}
Real cases differed from synthetic data: 68\% had missing sensor data or ambiguous symptoms, 32\% involved simultaneous multi-component failures, 44\% required cross-referencing work orders from 6+ months prior, and technicians used colloquial terminology requiring interpretation.

\paragraph{Knowledge Graph Construction.}
Partner A (Hydraulics): 847 nodes, 2,341 edges covering 12 equipment types. Partner B (Automotive): 1,124 nodes, 3,156 edges covering 8 production lines. Graphs integrated technical manuals, 200+ historical work orders, equipment specifications, and safety protocols.

\paragraph{Evaluation Protocol.}
Each of the 25 cases was evaluated by three maintenance technicians (domain experts) who provided ground truth. Expert consensus required 2/3 agreement; 4 cases without consensus were excluded, leaving 21 evaluable cases.

\paragraph{Results.}

\begin{table}[t]
\centering
\caption{Performance on real industrial maintenance cases ($n=21$)}
\label{tab:realworld}
\small
\begin{tabular}{@{}lcc@{}}
\toprule
\textbf{Metric} & \textbf{FTMA} & \textbf{G²CP} \\
\midrule
Correct Diagnosis & 11/21 (52\%) & 18/21 (86\%) \\
Correct Procedure & 9/21 (43\%) & 17/21 (81\%) \\
Complete Resolution & 7/21 (33\%) & 15/21 (71\%) \\
Avg.\ Response Time (s) & 6.8 & 4.2 \\
Technician Approval & 48\% & 89\% \\
\bottomrule
\end{tabular}
\end{table}

G²CP achieved 86\% diagnostic accuracy on real cases versus 90\% on synthetic data, demonstrating robustness to real-world noise.

\paragraph{Qualitative Findings.}
Technicians reported trusting G²CP because they could verify graph paths. G²CP surfaced 4 historical failure patterns that current technicians were unaware of. Average time-to-diagnosis decreased from 47 minutes (manual) to 12 minutes (G²CP-assisted).

\paragraph{Failure Analysis.}
The 3 failures: (1)~a custom-modified component not in the knowledge graph, (2)~``intermittent noise'' too ambiguous for reliable linking, (3)~a previously unseen failure mode requiring human expert diagnosis. These highlight the importance of continuous graph maintenance.

\balance


\section{Theoretical Analysis}
\label{sec:theory}

We present key theoretical results; detailed proofs are in Appendix~\ref{app:proofs}.

\subsection{Computational Complexity}

\begin{theorem}[Traversal Complexity]
A traversal $\mathtt{TRAVERSE}\\(V_s, \Psi_f, h, \mathit{ret})$ has time complexity $O(|V_s| \cdot d^h \cdot |\Psi_f|)$ where $d$ is the average node degree.
\end{theorem}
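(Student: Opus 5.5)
We argue by induction on the recursive definition of $\mathcal{T}$ from the operational semantics, bounding the work done at each level of the recursion. First we fix the cost model. The graph is stored as adjacency lists with constant-time edge type lookup, which matches the indexed Neo4j storage described in the implementation. Under this model, the filtered neighborhood $N(V, \Psi_f)$ is computed by scanning, for each $v \in V$, its outgoing edges and testing each edge's type against $\Psi_f$. If $\Psi_f$ is stored as an unsorted list, each test costs $O(|\Psi_f|)$, so computing $N(V,\Psi_f)$ costs $O(|V| \cdot d \cdot |\Psi_f|)$. Here $d$ is read as the average, or more safely a uniform bound on, the out-degree. The factor $|\Psi_f|$ is therefore an upper bound; it collapses to $O(1)$ with hashing.

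Second, we bound the size of the frontier. Let $V^{(i)}$ denote the accumulated node set after $i$ recursive steps, so $V^{(0)} = V_s$ and $V^{(i+1)} = V^{(i)} \cup N(V^{(i)},\Psi_f)$. A straightforward induction gives $|V^{(i)}| \le |V_s|(1 + d + \dots + d^i) = O(|V_s| d^i)$ for $d \ge 2$, and $O(|V_s|\, i)$ for $d \le 1$. To avoid rescanning old nodes, we would implement the recursion as a breadth-first frontier expansion, as in the Traversal Executor. Only newly added nodes $F^{(i)} = V^{(i)} \setminus V^{(i-1)}$ are expanded, and $|F^{(i)}| \le |V_s| d^i$. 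Step $i$ then costs $O(|V_s| d^{i} \cdot d \cdot |\Psi_f|)$. Summing the geometric series over $i = 0,\dots,h-1$ gives $O(|V_s| \cdot d^h \cdot |\Psi_f|)$. The final $\mathit{extract}$ step is linear in the output for $\mathtt{SUBGRAPH}$ and $\mathtt{LEAVES}$, so it is also within this bound.

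The main obstacle is that the statement is loose in two places. The first is the use of \emph{average} degree. In a heterogeneous graph, hub nodes can make the frontier from a particular $V_s$ grow much faster than $d^i$. I would state the result with $d$ as the maximum out-degree, or as an expected bound under a random choice of sources, and remark that average-case behavior matches the stated form.

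The second is the case $h=\infty$ and the $\mathtt{PATHS}$ return type. With $h=\infty$, the recursion reaches a fixpoint after at most the diameter number of steps, and the cost is then $O(|V|+|E|)$ times the filter factor, so the bound should be read for finite $h$. For $\mathtt{PATHS}$, the number of paths can exceed the number of nodes. However, the number of filtered walks of length at most $h$ from $V_s$ is itself $O(|V_s| d^h)$, so enumerating them without deduplication still fits the bound. Handling this honestly, rather than relying on deduplication, is the step I would check most carefully.
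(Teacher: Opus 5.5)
Your argument is correct once $d$ is taken as a maximum out-degree and $h$ as finite, and it is the same per-hop frontier count the paper uses, done more carefully. The paper tacitly treats the average degree as a per-node maximum and bounds the frontier by $|V_s|\cdot(d\cdot|\Psi_f|)^h$, which does not actually give $O(|V_s|\cdot d^h\cdot|\Psi_f|)$, whereas your approach of charging $|\Psi_f|$ once per scanned edge and expanding only the new frontier does give it; drop, however, the ``expected bound under a random choice of sources'' alternative, since it fails for $h\ge 2$: in a bidirected star the average out-degree is below $2$, yet a two-hop traversal from every single source touches $\Theta(|V|)$ nodes.
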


\emph{Proof sketch.} Each hop expands the frontier by at most $d \cdot |\Psi_f|$ nodes. With $h$ hops from $|V_s|$ sources, the explored region is $O(|V_s| \cdot d^h \cdot |\Psi_f|)$. In practice, $h \leq 3$ and $|\Psi_f| \ll |\Psi|$, making traversals highly efficient. Full proof in Appendix~\ref{app:proofs}. $\square$

\subsection{Message Complexity}

\begin{theorem}[Communication Efficiency]
For a task requiring $k$ agent interactions, G²CP consumes $O(k \cdot \log |V|)$ tokens, while free-text communication consumes $O(k \cdot L)$ tokens where $L$ is the average natural language message length.
\end{theorem}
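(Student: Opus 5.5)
I would prove the statement by bounding the length of a single G²CP message in terms of $|V|$, and then summing over the $k$ interactions. The free-text bound needs no real argument: each natural-language message has average length $L$ by definition, so $k$ messages cost $O(k \cdot L)$ tokens in expectation or on average. The substance is therefore the per-message bound for G²CP, namely that a message $m = \langle \mathit{sender}, \mathit{receiver}, \mathit{perf}, \mathit{op}, \mathit{ctx}\rangle$ serializes, under the syntax of the Message Syntax subsection, to $O(\log |V|)$ tokens.

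I would bound each field of the tuple separately.
\begin{itemize}
\item The sender and receiver are drawn from the finite agent set $\mathcal{A}$, and the performative from the seven-element set $\mathcal{P}$ of Table~\ref{tab:performatives}. Each costs $O(1)$ tokens.
\item The conversation identifier in $\mathit{ctx}$ is transmitted by reference, as \texttt{\{CURRENT\_FOCUS\}} is, and not by materializing $G_{\mathit{focus}}$. It costs $O(1)$ tokens.
\item For a traversal $\mathtt{TRAVERSE}(V_s, \Psi_f, h, \mathit{ret})$, $\Psi_f \subseteq \Psi$ with $\Psi$ a fixed schema, so its encoding is $O(|\Psi|) = O(1)$. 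The depth $h$ is a small integer or $\infty$; by Algorithm~\ref{alg:llm_selection} it satisfies $h \le 3$ in practice, so it is $O(1)$. The return type is one of three keywords.
\item Each node identifier distinguishes one of $|V|$ vertices. It can be written in $\lceil \log_b |V| \rceil$ symbols over a token alphabet of size $b$, so it costs $O(\log |V|)$ tokens.
\end{itemize}
A message naming $|V_s|$ source nodes thus costs $O(|V_s| \log |V|)$. Summing over $k$ messages gives the claim, provided $|V_s| = O(1)$.

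The main obstacle is that the bound silently requires two modelling assumptions. First, the number of explicitly named nodes per message must be bounded by a constant. Second, INFORM responses, which return a \texttt{SUBGRAPH}, must not be counted at full size. A returned subgraph can have size up to the Traversal Complexity bound $O(|V_s| \cdot d^h \cdot |\Psi_f|)$, which is not logarithmic in $|V|$.

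I would handle the first assumption with the selector mechanism. Type-based, property-based and contextual selectors such as \texttt{\{Symptom WHERE severity>0.8\}} or \texttt{\{CURRENT\_FOCUS\}} denote arbitrarily large sets with $O(1)$ tokens. Explicit ID lists come from entity extraction over a user query of bounded length, so $|V_s| \le c$ for a constant $c$.

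For the second assumption, I would state explicitly that results are counted by reference. By Theorem~\ref{thm:determinism}, a result is a deterministic function of $(G, \mathit{op})$. An INFORM can therefore carry the operation (or a handle to the audit-log entry) and need not carry the materialized subgraph, since the receiver can recompute it exactly. This is also the verification route described in the commitment semantics. If materialized results must be counted, I would instead state a refined bound of $O(k \cdot (\log |V| + r \log |V|))$, where $r$ bounds the returned subgraph size. I would note that the truncation limits of the runtime engine make $r$ a constant, which recovers $O(k \log |V|)$.

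Finally, I would contrast this with $L$. Natural-language descriptions do not carry unambiguous identifiers; they must describe entities, and $L$ does not shrink with graph structure. This gives the asymptotic separation whenever $L = \omega(\log |V|)$.
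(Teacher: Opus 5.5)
Your proposal is correct and follows the paper's own proof. You bound each field of a single message: $O(1)$ for the agents, performative, edge-type filter, hop count and return format, and $O(\log |V|)$ per node identifier for a source set of bounded size. You then sum over the $k$ interactions and charge $L$ tokens per free-text message by definition of the average, exactly as the paper does.

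Your proposal also goes further than the paper in two places. You make the bounded-$|V_s|$ assumption explicit, and you account for materialized \texttt{SUBGRAPH} payloads in INFORM replies, whereas the paper only counts the fields of a request-style message and silently ignores result payloads. The paper equally ignores the size of $\Delta G$ in UPDATE messages, and you omit this too; you would want to bound it the same way, by a constant number of node identifiers per update.
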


\emph{Proof sketch.} G²CP messages encode node IDs ($O(\log |V|)$ each), constant-size edge types, and small integer hop counts. Natural language messages require $L \approx 100$--$300$ tokens. Since $\log |V| \ll L$, G²CP achieves substantial savings. Full proof in Appendix~\ref{app:proofs}. $\square$

\subsubsection{Token Counting Methodology}
\label{sec:token_counting}

We count \textbf{all inter-agent messages} (the content exchanged between agents), excluding: (1)~internal LLM reasoning within a single agent, (2)~the original user query, and (3)~the final response to the user. This isolates the communication overhead that G²CP specifically targets.

Token counts are computed using the \texttt{tiktoken} cl100k\_base tokenizer (consistent with GPT-4). The empirical range of 100--300 tokens per FTMA message reflects the distribution across our 500 queries:

\begin{table}[h]
\centering
\small
\begin{tabular}{@{}lcc@{}}
\toprule
\textbf{Query Complexity} & \textbf{FTMA tokens/msg} & \textbf{G²CP tokens/msg} \\
\midrule
Factoid (1-hop) & $87 \pm 23$ & $28 \pm 6$ \\
Diagnostic (2-hop) & $194 \pm 41$ & $38 \pm 8$ \\
Procedural (2-hop) & $178 \pm 38$ & $35 \pm 7$ \\
Relational (2-3 hop) & $267 \pm 52$ & $45 \pm 11$ \\
Predictive (3-hop) & $312 \pm 67$ & $52 \pm 14$ \\
\bottomrule
\end{tabular}
\end{table}

The token reduction scales with query complexity: simpler queries see $\sim$3$\times$ reduction, while complex multi-hop queries see $\sim$6$\times$ reduction.

\subsection{Correctness Guarantees}

\begin{theorem}[Non-Hallucination]
If agent $A$ makes a claim $c$ with G²CP provenance trace $\tau = \langle m_1, \ldots, m_k \rangle$, then either $c$ is grounded in $G$ or $\tau$ is invalid (detectable through replay).
\end{theorem}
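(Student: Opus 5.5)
The plan is to make precise what ``grounded'' and ``valid trace'' mean, and then reduce the claim to Theorem~\ref{thm:determinism} (Determinism) and Theorem~\ref{thm:auditability} (Auditability).

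\textbf{Definitions.} Fix an initial graph state $G_0$ and a trace $\tau = \langle m_1, \ldots, m_k \rangle$. Let $G_i$ be the state after replaying $m_1, \ldots, m_i$. UPDATE messages change the state by the set-theoretic rule $G_i = \mathtt{UPDATE}(G_{i-1}, \Delta G_i)$, and every other message leaves it fixed.

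A claim $c$ is \emph{grounded in $G$} if every node and edge it asserts belongs to the subgraph returned by some message of $\tau$, namely an INFORM result or a REQUEST/QUERY response. That subgraph must in turn equal the output of the corresponding operation on the state at that point. Concretely, it must satisfy $R_i = \mathcal{T}(V_s^{(i)}, \Psi_f^{(i)}, h^{(i)}, \mathit{ret}^{(i)})$ evaluated on $G_{i-1}$, so $R_i \subseteq G_{i-1}$. Since the later states $G_i, \ldots, G_k$ may delete some of these elements, grounding is understood relative to the replayed state at the time of assertion.

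The trace $\tau$ is \emph{valid} if two conditions hold at every step $i$. First, each asserted result $R_i$ coincides with the replayed output. Second, each update's preconditions hold, which in particular means $\Delta V^- \subseteq V_{i-1}$ and $\Delta E^- \subseteq E_{i-1}$.

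\textbf{Induction.} The core step is an induction on $i$. Assume the trace is valid up to $i-1$. By Theorem~\ref{thm:auditability}, the replayed state $G_{i-1}$ is uniquely reconstructible from $G_0$ and $m_1, \ldots, m_{i-1}$. By Theorem~\ref{thm:determinism}, re-executing $\mathit{op}_i$ on $G_{i-1}$ yields a unique result $R_i^\ast$, independent of which agent performs it.

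Now compare $R_i^\ast$ with the result asserted in $m_i$.
\begin{itemize}
\item If they are equal, the INFORM commitment $C(A, B, \text{grounded}(R_i, G))$ is discharged. We also need $R_i^\ast \subseteq G_{i-1}$. This follows from the recursive definition of $\mathcal{T}$, since $N(V_s, \Psi_f)$ only collects endpoints of edges in $E$. One detail must be checked: $V_s$ itself has to lie in $V$, which node resolution guarantees.
\item If they differ, the mismatch is witnessed at step $i$, so $\tau$ is invalid and the replay detects it.
\end{itemize}

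\textbf{Conclusion.} Finally, case-split on $c$. If every element of $c$ appears in some validated $R_i$, then $c$ is grounded. Otherwise there are two possibilities. Either some element of $c$ is supported by no message of $\tau$, in which case $\tau$ is not a provenance trace for $c$ and is invalid by definition. Or it is supported only by a mismatched $R_i$, in which case the induction step has already flagged the trace. In both cases, invalidity is decided by a finite replay, which gives detectability.

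\textbf{Main obstacle.} The difficult part is not the induction but pinning down the meaning of ``claim $c$ with provenance trace $\tau$.'' The final natural-language response is produced by an LLM, so it could contain content that maps to no graph element. I would handle this by restricting the theorem to claims expressible as sets of nodes and edges, or as paths, attributed to specific messages. I would state explicitly that unattributed free-text content counts as having an invalid (empty) trace.

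A secondary subtlety is the use of $\mathcal{T}$ with $h = \infty$, and the runtime truncations at 1000 or 5000 nodes. For these I would argue that the truncation is itself a deterministic function of the graph and the operation, using deterministic node ordering, so replay still reproduces it exactly.
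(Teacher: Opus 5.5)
Your proposal is correct and follows the paper's own argument. Like the paper, you replay $\tau$ from $G_0$, use determinism to get the unique replayed results (the paper's $G_\tau = \bigcup_i \mathcal{T}(m_i.\mathit{op}, G_{i-1})$), and split on whether $c$ is supported by them: a mismatch between asserted and replayed results exposes a falsified trace, and unsupported content exposes fabrication. Your step-by-step induction and explicit definitions are a more careful version of that same replay. Counting unsupported content as making $\tau$ an invalid provenance trace for $c$ also matches the theorem's ``grounded or invalid'' conclusion slightly better than the paper's ``$c$ fails verification against $G_\tau$''.
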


\emph{Proof sketch.} Replaying $\tau$ from $G$ produces a unique subgraph $G_\tau$. If $c$ is fabricated, either it fails verification against $G_\tau$ or the trace is falsified (replay produces different results). Both cases are detectable. Full proof in Appendix~\ref{app:proofs}. $\square$

\subsection{Security and Trust Model}

G²CP operates in multi-agent environments where malicious or compromised agents pose risks.

\paragraph{Threat Model.}
We consider: (1)~malicious agents sending harmful operations, (2)~compromised legitimate agents, and (3)~man-in-the-middle message modification.

\paragraph{Agent Authentication.}
Each agent $A_i \in \mathcal{A}$ has Ed25519 key pair $(pk_i, sk_i)$. All messages are signed: $m_{\text{signed}} = \langle m, \sigma \rangle$ where $\sigma = \text{Sign}(sk_{\text{sender}}, \text{Hash}(m))$.

\paragraph{Access Control.}
Role-based access control (RBAC) where each agent has permissions $\mathcal{P}(A_i) \subseteq \{\text{READ}, \text{TRAVERSE}, \text{UPDATE}\} \times 2^{\Lambda} \times 2^{\Psi}$.\\ For example: Diagnostic Agent has $(\text{TRAVERSE},\\ \{\text{Symptom, Fault}\}, \{\text{causes, indicates}\})$; Ingestion Agent has $(\text{UPDATE}, \Lambda, \Psi)$.

\paragraph{Graph Integrity.}
UPDATE operations undergo type constraint validation, relationship schema enforcement, provenance tagging, and rollback-capable graph versioning.

\paragraph{Trust Propagation.}
Agents maintain trust scores $\tau_{t+1}(A_j) = \alpha \cdot \tau_t(A_j) + (1-\alpha) \cdot \mathbb{I}[\text{Verify}(A_j, m_t)]$ with $\alpha = 0.9$. Low-trust agents ($\tau < 0.5$) trigger human review.


\section{Discussion}

\subsection{Advantages of G²CP}

\paragraph{Precision.} Graph operations eliminate referential ambiguity. \texttt{FROM \{Part:B-4521\}} is unambiguous, unlike ``the main bearing.''

\paragraph{Efficiency.} 73\% token reduction translates to lower API costs and faster inference.

\paragraph{Auditability.} Every reasoning step is verifiable by replaying graph operations---essential for healthcare, finance, and industrial safety.

\paragraph{Composability.} Complex queries decompose naturally into operation sequences without semantic drift.

\paragraph{Scalability.} G²CP scales sublinearly with graph size, while natural language approaches degrade due to context length limitations.

\subsection{Limitations}

G²CP requires a structured knowledge graph; for domains without existing graphs, construction requires 200--500 hours. If required information is absent from the graph, G²CP cannot help, whereas free-text agents can sometimes leverage LLM world knowledge (at the risk of hallucination). Developers must understand graph query semantics, though we argue this is no more complex than SQL or SPARQL. Real-time data (e.g., live sensor readings) may not be in the graph; hybrid approaches with external API calls are needed.

\subsection{Generalizability}

While validated in industrial contexts, G²CP applies to any domain with structured knowledge: healthcare (symptoms $\to$ diseases $\to$ treatments), legal (precedents $\to$ statutes $\to$ arguments), scientific research (citation networks, experimental results), and software development (code dependency graphs). The key requirement is that domain knowledge can be represented as a graph with meaningful node and edge types.

\subsection{Future Directions}

Promising extensions include: reinforcement learning for traversal strategy optimization, federated deployment across organizational boundaries with privacy constraints, temporal logic operators for time-evolving graphs, uncertainty quantification with probabilistic edges, and interactive natural language interfaces for inspecting agent-proposed traversals.

\section{Conclusion}

We presented G²CP, a graph-grounded communication protocol that replaces natural language with structured graph operations for multi-agent coordination. Through formal analysis, evaluation on 500 synthetic and 21 real-world industrial scenarios, we demonstrated 34\% accuracy improvement over free-text baselines, 73\% token reduction, complete elimination of cascading errors and hallucination propagation, and full auditability with deterministic reasoning traces.

G²CP bridges classical agent communication languages with modern neural architectures: it preserves the structured performatives and social commitments of FIPA-ACL while grounding content in graph operations compatible with LLM-based systems. The protocol's impact extends beyond industrial applications to any domain requiring precise, verifiable agent coordination.

We release our G²CP specification, implementation, evaluation datasets, and all baseline code to foster adoption and research.\footnote{Repository: \url{https://github.com/<<anonymous>>/g2cp}}

\begin{acks}
This research was supported by the University of Lorraine and the ERPI laboratory. We thank the industrial partners who provided domain expertise and evaluation support.
\end{acks}


\bibliographystyle{ACM-Reference-Format}


\appendix

\section{Detailed Proofs}
\label{app:proofs}

\begin{proof}[Proof of Theorem~\ref{thm:determinism} (Determinism)]
Traversal operations are defined by recursive set expansion with deterministic neighborhood functions. Specifically, $N(V_s, \Psi_f)$ is computed by iterating over all edges incident to $V_s$ and filtering by type---a deterministic set operation. The base case ($h=0$) returns a fixed set, and the recursive case applies the same deterministic expansion. Update operations are set-theoretic unions and differences. Neither traversal nor update involves stochastic processes, random sampling, or agent-specific interpretation. Therefore, for any fixed $G$ and $op$, the result is unique.
\end{proof}

\begin{proof}[Proof of Theorem~\ref{thm:auditability} (Auditability)]
Each message $m_i$ specifies explicit graph operations. Given initial state $G_0$ and message sequence $\langle m_1, \ldots, m_k \rangle$, we compute:
\begin{equation}
G_i = \begin{cases}
G_{i-1} & \text{if } m_i \text{ is traversal} \\
\mathtt{UPDATE}(G_{i-1}, \Delta G_i) & \text{if } m_i \text{ is update}
\end{cases}
\end{equation}
Each traversal on $G_i$ produces a deterministic subgraph (by Theorem~\ref{thm:determinism}). An auditor can verify that each agent's output is consistent with the subgraphs retrieved at the corresponding graph state. If any inconsistency is found, the audit fails, identifying the offending agent and operation.
\end{proof}

\begin{proof}[Proof of Theorem~\ref{thm:completeness} (Completeness)]
Any graph-based reasoning task decomposes into: (1)~identifying relevant entities (node selection via $V_s$), (2)~exploring relationships (edge traversal via $\Psi_f$ and $h$), (3)~extracting context (subgraph retrieval via $\mathit{ret}$). These map directly to $\mathtt{TRAVERSE}(V_s, \Psi_f, h, \mathit{ret})$. Complex queries requiring multiple perspectives compose through message sequences: Agent $A$ retrieves subgraph $G_1$, passes relevant nodes to Agent $B$, who retrieves $G_2$, etc. The union $G_1 \cup G_2 \cup \ldots$ covers any query answerable from $G$. Graph updates are handled by $\mathtt{UPDATE}$ operations. Therefore, G²CP can express any query answerable through graph traversal and RAG.
\end{proof}

\begin{proof}[Proof of Traversal Complexity]
At hop 0, the frontier is $V_s$ with $|V_s|$ nodes. At each hop $i \in [1, h]$, each frontier node has at most $d$ neighbors, of which at most $|\Psi_f|$ edge types are checked (each in $O(1)$ with hash-based lookup). The frontier grows to at most $|V_s| \cdot (d \cdot |\Psi_f|)^h$ nodes in the worst case. Since each node and edge is processed once, total complexity is $O(|V_s| \cdot d^h \cdot |\Psi_f|)$.
\end{proof}

\begin{proof}[Proof of Communication Efficiency]
A G²CP message contains: sender/receiver IDs (constant), performative (constant), node IDs (each requiring $\lceil \log_2 |V| \rceil$ bits, or $O(\log |V|)$ tokens for a set of bounded size), edge type identifiers (constant per type, bounded set), hop count (single integer), and return format (constant). Total: $O(\log |V|)$ tokens.

A natural language message describes entities by name/description ($O(L_e)$ tokens per entity), relationships in prose ($O(L_r)$ tokens), and intent/context ($O(L_c)$ tokens). Empirically, $L = L_e + L_r + L_c \approx 100$--$300$ tokens.

Over $k$ interactions: G²CP uses $O(k \cdot \log |V|)$ tokens; free text uses $O(k \cdot L)$ tokens. For $|V| \leq 10^6$, $\log |V| \leq 20 \ll 100 \leq L$.
\end{proof}

\begin{proof}[Proof of Non-Hallucination]
Given claim $c$ with provenance trace $\tau = \langle m_1, \ldots, m_k \rangle$:

Replay $\tau$ from $G_0$ to obtain subgraph $G_\tau = \bigcup_{i} \mathcal{T}(m_i.\mathit{op}, G_{i-1})$.

Case 1: $c$ is derivable from $G_\tau$. Then $c$ is grounded---verified.

Case 2: $c$ is not derivable from $G_\tau$. Either (a)~agent fabricated $c$ without graph support, in which case $c$ fails verification against $G_\tau$ (the auditor checks that every entity and relationship in $c$ appears in $G_\tau$), or (b)~the trace $\tau$ is falsified (agent claims a traversal returned results it did not), in which case replay produces different results and the trace is invalidated.

In both cases, hallucination is detectable.
\end{proof}

\section{Complete Agent Prompts}
\label{app:prompts}

\subsection{G²CP System Prompts}

\paragraph{Dispatcher Agent.}
\begin{quote}
\small\texttt{You are the Dispatcher agent in a G²CP multi-agent system. Your role is to: (1) Parse user queries in natural language, (2) Extract entities using the provided extraction prompt, (3) Classify query intent as diagnostic/procedural/predictive/factoid, (4) Select the appropriate specialist agent, (5) Construct G²CP REQUEST messages with TRAVERSE operations, (6) Aggregate results from specialist agents, (7) Generate the final natural language response from retrieved subgraphs. You NEVER communicate with other agents in natural language. All inter-agent messages use G²CP format.}
\end{quote}

\paragraph{Diagnostic Agent ($A_D$).}
\begin{quote}
\small\texttt{You are the Diagnostic Agent. You specialize in symptom-to-fault reasoning using edge types: \{causes, indicates, correlates\_with\}. When receiving a G²CP REQUEST with TRAVERSE, execute the specified graph operation exactly. Analyze retrieved subgraphs to identify root causes. If multiple faults are found, rank by path convergence (number of symptoms pointing to each fault). Return results as INFORM messages with the retrieved subgraph, confidence scores, and supporting paths. If additional information is needed, send REQUEST messages to other agents using G²CP format only.}
\end{quote}

\paragraph{Procedural Agent ($A_P$).}
\begin{quote}
\small\texttt{You are the Procedural Agent. You specialize in fault-to-action retrieval using edge types: \{addressed\_by, requires, precedes, has\_safety\_protocol\}. Execute TRAVERSE operations to find repair procedures, required parts, safety protocols, and procedure dependencies. Always include safety protocols in returned subgraphs. If historical frequency data would improve recommendations, send QUERY messages to the Synthesis Agent.}
\end{quote}

\paragraph{Synthesis Agent ($A_S$).}
\begin{quote}
\small\texttt{You are the Synthesis Agent. You specialize in historical pattern analysis using edge types: \{occurred\_in, replaced\_in, failed\_after\}. Execute TRAVERSE operations over work order history to compute co-occurrence frequencies, identify recurring patterns, and surface historical context. When discovering new correlations (co-occurrence > 60\%), propose UPDATE operations to add new edges with computed confidence scores.}
\end{quote}

\paragraph{Ingestion Agent ($A_I$).}
\begin{quote}
\small\texttt{You are the Ingestion Agent. You process UPDATE operations to modify the knowledge graph. Validate all updates against the graph schema: check node type constraints, relationship constraints, and data integrity. Tag all modifications with source agent ID and timestamp. Maintain graph versioning for rollback capability. Reject updates that violate schema constraints with REJECT messages explaining the violation.}
\end{quote}

\subsection{FTMA Baseline Prompts}

\paragraph{FTMA Dispatcher.}
\begin{quote}
\small\texttt{You are a dispatcher coordinating a team of maintenance AI agents. Analyze user queries and route them to the appropriate specialist. Communicate with other agents in natural language. Synthesize their responses into a final answer for the user. Available tools: graph\_query (executes Cypher queries on Neo4j).}
\end{quote}

\paragraph{FTMA Diagnostic Agent.}
\begin{quote}
\small\texttt{You are a diagnostic specialist for industrial equipment maintenance. When another agent describes symptoms, analyze them and query the knowledge graph to identify possible root causes. Explain your reasoning in natural language. Available tools: graph\_query.}
\end{quote}

\paragraph{FTMA Procedural Agent.}
\begin{quote}
\small\texttt{You are a maintenance procedure specialist. When given a diagnosed fault, search the knowledge graph for repair procedures, required parts, and safety protocols. Communicate findings in natural language. Available tools: graph\_query.}
\end{quote}

\paragraph{FTMA Synthesis Agent.}
\begin{quote}
\small\texttt{You are a pattern analysis specialist. When asked about historical trends, query the knowledge graph for work order history and identify recurring patterns. Report findings in natural language. Available tools: graph\_query.}
\end{quote}

\subsection{JSMA Baseline Prompts}

\paragraph{JSMA Diagnostic Agent.}
\begin{quote}
\small\texttt{You are a diagnostic specialist. Communicate using JSON messages. Input format: \{"action": "diagnose", "symptoms": ["symptom1", ...]\}. Output format: \{"faults": [\{"id": "...", "name": "...", "confidence": 0.0\}], "reasoning": "..."\}. Use graph\_query tool for Cypher queries.}
\end{quote}

(Similar JSON-structured prompts for other JSMA agents follow the same pattern with role-specific input/output schemas.)

\end{document}